
\documentclass[journal,10pt]{article}
\usepackage[a4paper,left=0.9in,right=0.9in,top=.9in,bottom=.9in]{geometry}
\usepackage{booktabs}
\usepackage[export]{adjustbox}
\usepackage{amsmath,amssymb,amsmath}
\usepackage{algorithm,algorithmic}
\usepackage{multirow}
\usepackage{siunitx}
\usepackage{blindtext}
\usepackage[utf8]{inputenc}
\usepackage{commath}
\usepackage{graphicx}
\usepackage{xcolor}
\usepackage{nicefrac} 
\usepackage{url}
\hyphenation{op-tical net-works semi-conduc-tor}
\usepackage{textpos}
\usepackage{amsthm}
\usepackage{multicol}
\usepackage[colorlinks = true,
linkcolor = black,
urlcolor  = black,
citecolor = black,
anchorcolor = black]{hyperref}
\usepackage[font={small,it}]{caption}
\usepackage{lineno}
\usepackage{textpos}

\usepackage[T1]{fontenc}
\usepackage{xcolor}

\usepackage{booktabs}
\usepackage[export]{adjustbox}
\usepackage{amsmath,amssymb,amsmath}
\usepackage{algorithm,algorithmic}
\usepackage{multirow}
\usepackage{siunitx}
\usepackage{blindtext}
\usepackage[utf8]{inputenc}
\usepackage{commath}
\usepackage{graphicx}

\usepackage{nicefrac} 
%
%
\usepackage{url}

\begin{document}
\newtheorem{prop}{\textbf{Proposition}}
\newtheorem{rem}[prop]{\textbf{Remark}}

\title{\textcolor{black}{Isotropic multi-channel  total variation framework} for joint reconstruction of multi-contrast parallel MRI }

\author{Erfan~Ebrahim~Esfahani
\thanks{The author is an independent researcher from Tehran,	Iran. Email: {erfan.ebrahim@outlook.com}. ORCID:  0000-0002-4878-3185. This study was not financially supported.}
}

\date{}
\maketitle

\begin{abstract}
\noindent \textbf{Purpose}: To develop a synergistic image reconstruction framework that exploits multi-contrast, multi-coil and compressed sensing redundancies in MRI.\\ \textbf{Approach}: Compressed sensing, multi-contrast acquisition and parallel imaging have been individually well developed but the combination of the three has not been equally well studied, much less the potential benefits of isotropy within such a setting. Inspired by total variation theory, this paper introduces a novel isotropic multi-contrast image regularizer and attains its full potential by integrating it into compressed multi-contrast multi-coil MRI. A convex optimization problem is posed to model the new  variational framework and a first-order algorithm is developed to solve the problem.\\ 
\textbf{Results}: \textcolor{black}{It turns out that the proposed isotropic regularizer outperforms many of the state-of-the-art reconstruction methods not only in terms of rotation-invariance preservation of symmetrical features, but also in suppressing noise or streaking artifacts which are normally encountered in parallel imaging methods at aggressive undersampling rates. Moreover, the new  framework significantly prevents inter-contrast leakage of contrast-specific details, which seems to be a difficult situation to handle for some variational and low-rank multi-contrast reconstruction approaches.}\\
 \textbf{Conclusion}: The new framework is a viable option for image reconstruction in fast protocols of multi-contrast parallel MRI, potentially reducing patient discomfort in otherwise long and time consuming scans.


\vspace{0.25cm}
\noindent Keywords: Magnetic Resonance Imaging (MRI),  Multi-contrast Imaging, Parallel Imaging, Compressed Sensing,  Variational Image Processing, Iterative  Image Reconstruction.
 
\end{abstract}


\begin{textblock*}{\textwidth}(0cm,-14.5cm)

Final published version:\\
Erfan Ebrahim Esfahani, "Isotropic multichannel total variation framework for joint reconstruction of multicontrast parallel MRI," J. Med. Imag. 9(1), 013502 (2022), doi: \href{ 10.1117/1.JMI.9.1.013502.}{10.1117/1.JMI.9.1.013502.} 

\end{textblock*}

\section{Introduction, Background and Related Work}

{Magnetic} resonance imaging is perhaps \textcolor{black}{best known for} non-invasiveness, lack of hazardous radiation and excellent soft tissue visualization. However, the intrinsically slow acquisition has forced researchers to resort to various redundancies in MRI data to speed up the process. Compressed sensing (CS), multi-contrast (MC) acquisition and parallel imaging (PI) are various approaches that exploit specific types of redundancies. In CS, sparse representations of the underlying image in certain domains are used to reduce the acquired data \cite{CS MRI}. Since certain features (or lesions) are only observable in one contrast (and not all contrasts), MC imaging of the same anatomy helps to detect such a feature, where different contrasts make up for each other's deficits.  However, except for the contrast-specific features, many other details such as edges preserve the same layout across contrasts, which might be a useful redundancy. In PI \cite{PI}, multiple receive coils are placed around the object and each coil best detects the signals emitted from a specific region, depending on its sensitivity map. All regions could be imaged simultaneously, speeding up the whole sequence. 

There are two mainstream PI methods: \textit{image-domain} (or SENSE-type) approaches, pioneered by \cite{SENSE}, which are the most mathematically general strategies adaptable with arbitrary sampling patterns and yielding one single output image (hence running only once, saving computational resources), and \textit{frequency-domain} algorithms pioneered by \cite{GRAPPA}, which interpolate k-space for each coil image separately (hence running as many times as the number of coils, taking computational resources) followed by a combination of coil images. Sometimes this type of PI does not admit arbitrary sampling patterns. PI can further be divided into calibration-based methods which require a densely sampled portion of central k-space known as auto-calibration signal (ACS) region for sensitivity estimation, and calibration-free methods that do not require (but may still benefit from) such a signal \cite{STDLR,SAKE,ENLIVE,P-LORAKS}. \textit{These CS methods all consider PI alone, without MC redundancy.}

  In MC imaging, similar ideas as those in single-contrast single-coil (SC-SC) methods (e.g. \cite{LORAKS,GBRWT,Erfan}) have been developed. For instance, in \cite{haung} joint total variation (TV) and wavelet tree sparsity are explored, in \cite{BCS} group sparsity on gradients of multiple contrasts within a multi-task Bayesian framework is utilized, in \cite{JGBRWT} graph-based redundant wavelet is extended to multiple contrasts, in \cite{Ehr} new versions of TV are introduced for MC-MRI where local and directional manipulations are incorporated into TV formulation and in \cite{CDLMRI} coupled dictionary learning is used in a setting where one contrast guides the reconstruction of another. 
  \textit{All these methods consider MC imaging alone, without PI.}

In state-of-the-art literature, the methods of \cite{Itthi,J-LORAKS,SIMIT} are perhaps the closest matches for the proposed approach, in that \textit{they consider both MC and PI}. 
In \cite{SIMIT} color TV (previously developed in \cite{CTV}) and joint sparsity are used to share cross-contrast details while conventional TV and single-{contrast} sparsity are used to preserve contrast-specific features.
In \cite{Itthi}, MC-PI-CS MRI was considered via extensions of TV and total generalized variation (TGV, originally introduced in \cite{TGV MRI} for MRI). 
These extensions showed improvements over single-contrast TV and TGV, however, the experiments of the proposed paper (presented in Section \ref{sec exp}) suggest that in MC-PI setting, TGV does not offer a statistically meaningful improvement over TV. In addition, inherent drawbacks of TV and TGV such as over-smoothing, loss of contrast, staircasing \textcolor{black}{(in TV)}, leakage of a {contrast}'s exclusive details into other contrasts and lack of joint MC rotation-invariance (or joint isotropy) are still present in these extensions. 
The LORAKS family \cite{LORAKS,S-LORAKS,J-LORAKS,P-LORAKS} uses limited spatial support and smoothly varying phase assumptions (originally developed in \cite{LORAKS}, later extended to PI in \cite{P-LORAKS,S-LORAKS}) to reconstruct MR images. The MC-PI member of the family \cite{J-LORAKS} (called J-LORAKS, with ``J'' standing for joint) is similar to \cite{P-LORAKS} but concatenates multi-contrast data along coil dimension and consequently treats additional contrasts as extra PI coils. Amplification of noise and severe aliasing at higher undersampling factors are notable drawbacks inherent in the family.

In a previous work \cite{Erfan}, the \textit{rotation invariant total variation} (RITV) \textcolor{blue}{\cite{Condat}} was used to reconstruct SC-SC MR images $u\in \mathbb{C}^{n\times n}$ with 
\begin{equation} \label{RITV}
\begin{split}\text{RITV}&(u)=\min_{\{v_s:s\in S\}} \sum_{s\in S} ||v_s||_{1,2}\\ &\text{s.t. } \sum_{s\in S} L^*_sv_s - Du = 0,
\end{split}
\end{equation}
where the set \textcolor{black}{ $S =\{\leftrightarrow, \updownarrow,\bullet,+\}$} symbolized gradient field (GF) mesh grids, $D$ was the conventional 2D finite differences' operator given by $(D_1u,D_2u)(i,j)=\big(u(i+1,j)-u(i,j),u(i,j+1)-u(i,j) \big)$ and \textcolor{black}{for each $s\in S$ the linear operators $L_s$ (with adjoint $L_s^*$) interpolated the GF images $v_s=(v_{s,1},v_{s,2})\in \mathbb{C}^{2(n\times n)}$ on the new grids \cite{Condat} such that isotropy was satisfied \cite{Erfan}. Specifically, we can write out $L^*_sv_s$ for each $s\in S$ \cite{Condat,Erfan}:
\begin{equation}
\begin{split}
L_\updownarrow^* v_\updownarrow(i,j) &= \frac{1}{4} \big(  4v_{\updownarrow,1}(i,j), v_{\updownarrow,2}(i,j) + v_{\updownarrow,2}(i,j+1) +v_{\updownarrow,2}(i-1,j)+v_{_\updownarrow,2}(i-1,j+1)\big),\\
L_\leftrightarrow^* v_{\leftrightarrow}(i,j) &= \frac{1}{4} \big( v_{\leftrightarrow,1}(i,j)+ v_{\leftrightarrow,1}(i+1,j)+ v_{\leftrightarrow,1}(i,j-1)+ v_{\leftrightarrow,1}(i+1,j-1),4 v_2(i,j) \big),\\
L_\bullet^* v_{\bullet}(i,j) &= \frac{1}{2} \big(  v_{\bullet,1}(i,j)+ v_{\bullet,1}(i+1,j),  v_{\bullet,2}(i,j)+ v_{\bullet,2}(i,j+1) \big),\\
L_+^* v_+(i,j) &= \frac{1}{2} \big(  v_{+,1}(i,j)+ v_{+,1}(i,j-1),  v_{+,2}(i,j)+ v_{+,2}(i-1,j) \big).\\
\end{split}
\end{equation}}

\textcolor{black}{The proposed framework is primarily a variational approach which belongs to the same family as TV and TGV. While it inherits some intrinsic features from multi-contrast TV and TGV (e.g. admitting an arbitrariy number of contrasts $N$ which is not restricted to $N=2$ as in some other methods \cite{Ehr,CDLMRI}), it differs from them thanks to a unique, mathematically provable multi-contrast isotropic feature. As we shall see, this isotropy leads to a remarkable advantage not only over TV and TGV but also many other approaches (to the best of the author's knowledge, the potential benefits of isotropy has been ignored within MC frameworks to this day). For instance, noise and aliasing are strongly suppressed in the new framework. In addition, leakage of a {contrast}'s exclusive details into other contrasts where those details are supposed to be invisible, is suppressed much better than other MC reconstruction approaches. Moreover, unlike MC TV and TGV which seem to perform on par at least from a numerical and statistical point of view (to be discussed in Section \ref{sec exp}), the new method shows meaningful statistical improvement over other approaches.}  

 In addition to being a variational approach, the proposed method is not unlike the LORAKS family and many other recent PI methods (e.g. \cite{STDLR,ENLIVE,ESPIRiT,SAKE}), because in essence, they all seek to enforce interpolated low-rank matrix models, only in different spaces: multi-coil local k-space in the PI and LORAKS methods (inspired by Hankel structure or limited support and smooth phase) as apposed to joint pixel-wise spatial GF domain in the proposed method (inspired by rotation invariance and aligned edges). The proposed method may also be viewed as a hybrid SENSE-autocalibrating approach in PI because it mixes merits of SENSE, including admittance of arbitrary sampling patterns and one-time reconstruction, with the core advantage of autocalibration which is online estimation of sensitivity profiles directly from the undersampled (possibly noisy) data without taking extra calibration scans (the latter part is mainly played by ESPIRiT \cite{ESPIRiT}).

  In-vivo experiments will show the robustness of the proposed strategy against imperfections such as {inaccurate coil sensitivity estimations}, high undersampling factors, noise and mismatch between contrasts (due to, for example, patient motion in free-breathing sequences) where other state-of-the-art approaches simply fail. 
Convex analysis will be employed to model the proposed method, for which the Malitsky-Pock algorithm \cite{MP} is developed to obtain a solution. Even in extreme situations, the proposed algorithm needs at most one parameter to be tuned, which is yet another manifestation of robustness.
 
The rest of the paper is organized as follows: Section \ref{sec proposed framework} describes the proposed method. Section \ref{sec exp} reports the experiments and Section \ref{sec concl} makes the concluding remarks. {Appendix is also provided to cover supporting information and is referenced when needed.}

\section{Proposed Framework} \label{sec proposed framework}
 
\subsection{Theory}
Assuming square size $n\times n$ for simplicity of notation, a trivial extension of (\ref{RITV}) to a stack of multi-{contrast} square images  $(u_c)_{c=1}^N \in (\mathbb{C}^{n\times n})^N$ would be {contrast}-by-{contrast} application of the term; that is, 
$\text{RITV}((u_c)_1^N)=\sum_{c=1}^N \text{RITV}(u_c)$. However, such a naive approach is not synergistic and would not benefit from abundant redundancies and correlations among multiple contrasts. 
An isotropic regularization that exploits such information as well is more likely to succeed.

Assuming that MC images are reasonably well registered, it makes sense to \textcolor{black}{encourage} the edges to be aligned in the reconstruction. 
Indeed, it is known that reducing the number of linearly independent gradient directions at each pixel location $(i,j)$ (or equivalently, increasing their linear dependence) would lead to aligned edges in pixel domain (see Fig. 2 in \cite{TNV}). If pixel $(i,j)$ is located in the interior of a constant (or highly smooth) region of all {contrast}s, then there is no common edge at $(i,j)$ and if the gradient vectors of all contrasts are stacked in a matrix (which we call the joint pixel-wise N-{contrast} GF matrix), such a matrix would entirely be $0$, \textcolor{black}{trivially leading to rank $0$}.  Now, if $(i,j)$ is on a common edge in all N {contrast}s, then all GF vectors are (nearly) the same (up to an unimportant sign flip), leading to a rank of (nearly) $1$. Finally, assume the most complicated case where a lesion is observed in one contrast but invisible in the others and $(i,j)$ is located on an edge of such a lesion. Then there is a non-trivial gradient vector at $(i,j)$ in the lesion's contrast of origin but such a gradient does not exist in the other contrasts and the gradient vectors at $(i,j)$ in those contrasts are (nearly) $0$ and highly linearly-dependent. \textcolor{black}{This case also corresponds to rank $1$}. Therefore, a low-rank behavior is anticipated in the joint pixel-wise GF matrix at all times. Fig. \ref{MC_phantom} gives an example of this idea. \textcolor{black}{We also remark that in this context, the maximum rank that the matrix could have is $2$ (becasue there are $2$ dimensions). Therefore, although the rank $2$ might seem low, for our purposes this rank is actually high and even unacceptable because it corresponds to the case where images are misaligned and poorly registered (again, see Fig. 2 in \cite{TNV}). Thus, theoretically, our regularizer has to be heavy enough to avoid rank $2$.  }   

 \begin{figure}\centering
\includegraphics[width=0.6\linewidth]{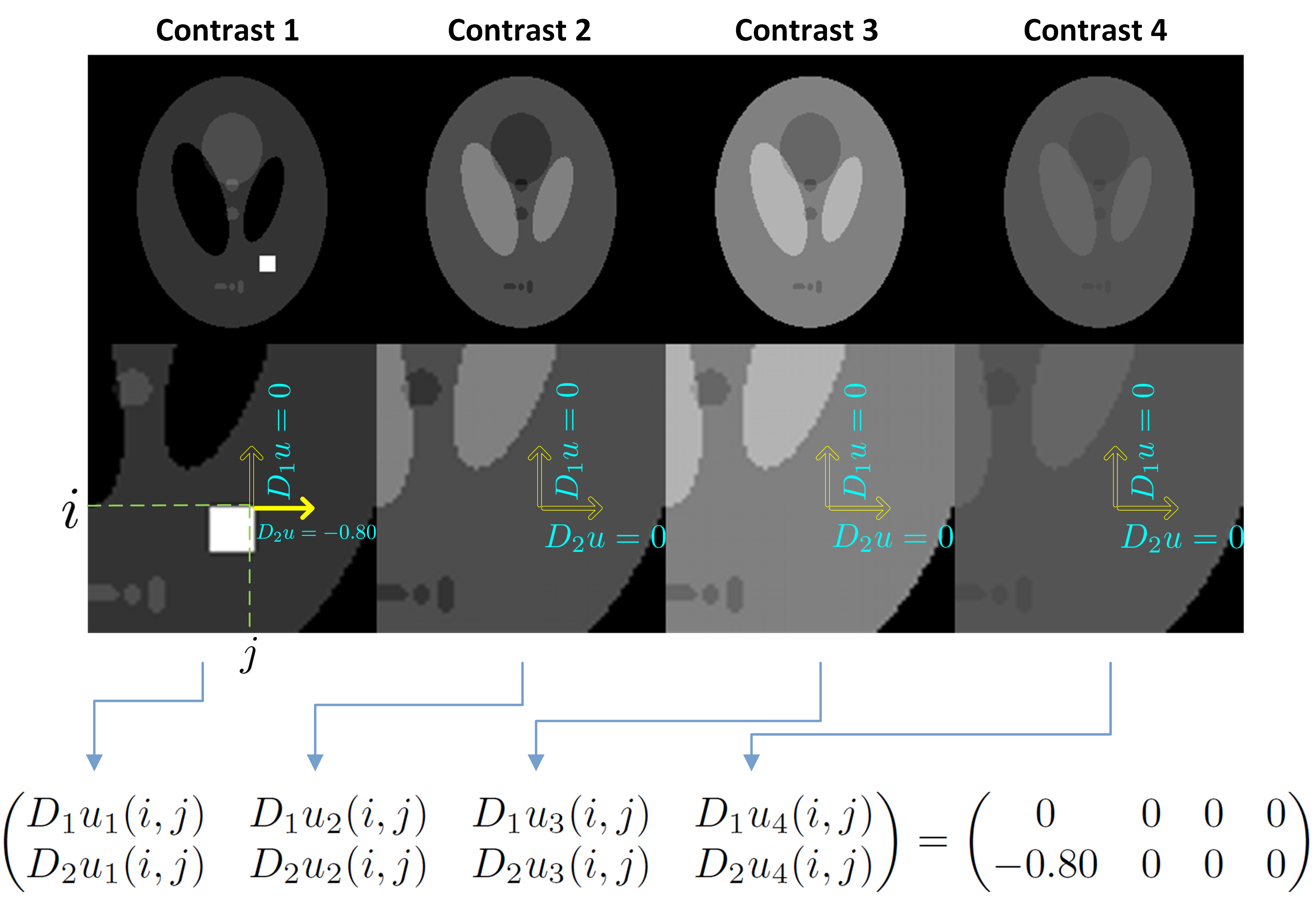}
	\caption{A simple simulation experiment involving a synthetically generated 4-contrast Shepp-Logan phantom, showing low-rankness of the joint gradient matrix at a pixel $(i,j)$ located on the top right corner of a feature exclusive to the first contrast. The hollow arrows on the figure indicate a null gradient component while a solid arrow represents a non-zero component. In this experiment $n=200$ and $(i,j) = (140,130)$. The computation also seems to suggest sparseness of the matrix. }
	\label{MC_phantom}
\end{figure}

 There are plenty of ways of enforcing low-rankness (we will discuss this \textcolor{black}{shortly}), however, we choose to model this behavior by direct thresholding of singular value decomposition (SVD), minimizing the nuclear norm of the pixel-wise joint N-{contrast} GF matrix over the complex GF variables interpolated by the operators $L_s,\,s\in S$. \textcolor{black}{This approach leads to a joint MC rotation-invariance property.}  
 The classical TV function in calculus is isotropic in the continuous domain for any rotation, but achieving a similar result in the discrete domain has proven to be elusive (\textcolor{black}{see \cite{Condat} and references cited therein}.) Nevertheless, it would be interesting to have a discrete version of this property at least for modulo $\pi/2$ rotations. In \cite[Figure 1, Table 1]{Condat} isotropic failure of many TV discretizations (including the so-called ``isotropic" discrete TV well-known in medical imaging literature) is established for flipping and $\pi/2$ rotations. In \cite[Figure 1, Figure 6]{Erfan} the isotropic failure of TV and TGV for $\pi/2$ rotations in CS MRI reconstruction was also observed.
 MR images can have many symmetrical and isotropic features and guaranteed isotropy will preserve such features, significantly improving the performance, even more so in MC settings. Based on these discussions, the following extension of (\ref{RITV})
is proposed ($ c\in\mathcal{C}:=\{1,\cdots,N\}$ and $||\cdot||_{\star}$ is the nuclear norm):
\begin{equation} \label{NRITV}
\begin{split}\text{NRITV}((u_c)_1^N):=&\min_{\{v_s^c\}} \sum_{s\in S} \sum_{i,j}||(v_s^1(i,j) \cdots v_s^N(i,j))||_{\star}\\ \text{s.t. } &\sum_{s\in S} L^*_sv_s^1 - Du_1 = 0,\\
&\vdots\\&\sum_{s\in S} L^*_sv_s^N - Du_N = 0.
\end{split}
\end{equation}
The isotropy of (\ref{NRITV}), which is one of the main theoretical results of this section, is asserted at once:
\begin{prop}\label{prop1}
Let $\mathcal{R}$ be the operator that rotates each {contrast} by $\ang{90}$. Then $\text{NRITV}(\mathcal{R}(u_c)_1^N) = \text{NRITV}((u_c)_1^N)$.
\end{prop}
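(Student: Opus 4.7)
The plan is to prove the equality by exhibiting an explicit, measure-preserving bijection between feasible points of the two minimization problems (the one defining $\text{NRITV}((u_c)_1^N)$ and the one defining $\text{NRITV}(\mathcal{R}(u_c)_1^N)$) under which the objective value is preserved; taking infima then gives one inequality, and applying the same construction with $\mathcal{R}^{-1}$ (which is just $\mathcal{R}^3$) gives the reverse. The construction exploits three structural facts: (i) the $2\times 2$ rotation matrix $R = \begin{pmatrix} 0 & -1 \\ 1 & 0 \end{pmatrix}$ is orthogonal, so left-multiplication by $R$ preserves singular values and hence the nuclear norm; (ii) the discrete differences' operator transforms under $\mathcal{R}$ as $D \circ \mathcal{R} = R \cdot \mathcal{R} \cdot D$ (a 90-degree rotation of the image produces a gradient that is the rotated original gradient sampled at the rotated pixel, up to a benign index shift at the boundary); and (iii) 90-degree rotation induces a permutation $\sigma$ on $S = \{\leftrightarrow, \updownarrow, \bullet, +\}$, namely $\sigma(\updownarrow) = \leftrightarrow$, $\sigma(\leftrightarrow) = \updownarrow$, $\sigma(\bullet) = +$, $\sigma(+) = \bullet$.

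Given a feasible $\{v_s^c\}_{s\in S,\, c\in\mathcal{C}}$ for the original problem, I would define candidates for the rotated problem by
\[
\tilde{v}_{\sigma(s)}^c(i,j) := R\, v_s^c\bigl(\mathcal{R}^{-1}(i,j)\bigr),
\]
and then verify three items. First, \textbf{feasibility}: $\sum_{s\in S} L_s^* \tilde{v}_s^c = D(\mathcal{R} u_c)$. Combining facts (i)--(iii) above, this reduces to the identity $L_{\sigma(s)}^*\bigl(R \cdot (v_s^c \circ \mathcal{R}^{-1})\bigr) = R \cdot \bigl((L_s^* v_s^c)\circ \mathcal{R}^{-1}\bigr)$, which I would check by direct inspection of the four explicit formulas for $L_\updownarrow^*$, $L_\leftrightarrow^*$, $L_\bullet^*$, $L_+^*$ given in the excerpt, observing that the four-point horizontal average in the second coordinate of $L_\updownarrow^*$ is transported to the four-point vertical average in the first coordinate of $L_\leftrightarrow^*$ by $\sigma$, and similarly the two-point stencils of $L_\bullet^*$ and $L_+^*$ are swapped by $\sigma$. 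Second, \textbf{pointwise objective}: at each pixel, the joint GF matrix satisfies
\[
\bigl(\tilde{v}_{\sigma(s)}^1(i,j)\;\cdots\;\tilde{v}_{\sigma(s)}^N(i,j)\bigr) \;=\; R\cdot \bigl(v_s^1(\mathcal{R}^{-1}(i,j))\;\cdots\;v_s^N(\mathcal{R}^{-1}(i,j))\bigr),
\]
so by unitary invariance of the nuclear norm the two matrices have the same value of $\|\cdot\|_\star$. Third, \textbf{global objective}: since $\sigma$ is a bijection of $S$ and $\mathcal{R}$ is a bijection of the pixel grid, summing the pixel-wise nuclear norms over $s$ and $(i,j)$ yields the same total.

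Putting the three items together shows that every feasible point of the original problem maps to a feasible point of the rotated problem of equal cost, and hence $\text{NRITV}(\mathcal{R}(u_c)_1^N) \le \text{NRITV}((u_c)_1^N)$; the reverse inequality is obtained by running the same construction starting from a feasible point of the rotated problem, using the rotation $\mathcal{R}^{-1}$ and the permutation $\sigma^{-1} = \sigma$. The main obstacle is the routine but delicate stencil bookkeeping in verifying feasibility, i.e., matching the $\mathcal{R}$-transported stencils of $L_s^*$ with those of $L_{\sigma(s)}^*$; a consistent boundary convention (e.g.\ periodic or Neumann extension, as assumed by the $L_s^*$ formulas) is needed to absorb the one-pixel index shifts arising from $D \circ \mathcal{R}$, but no structural difficulty beyond this appears.
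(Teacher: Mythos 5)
Your overall strategy is exactly the paper's: build a cost-preserving correspondence between feasible points by rotating each pixel-wise gradient vector with the orthogonal matrix $R$, re-indexing pixels by the spatial rotation, and permuting the four grids in $S$, then conclude via unitary invariance of the nuclear norm and bijectivity of the re-indexing. Your explicit two-sided argument (running the construction again with $\mathcal{R}^{-1}=\mathcal{R}^3$ to get the reverse inequality) is in fact slightly more careful than the paper, which only exhibits one direction and asserts equality.

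There is, however, one concrete error that would make your feasibility verification fail as written: the permutation $\sigma$ you propose is wrong on the $\bullet$ and $+$ grids. Under a $\ang{90}$ rotation, pixel centers map to pixel centers and pixel corners to pixel corners, so $\bullet$ and $+$ must be \emph{fixed points} of $\sigma$; only $\updownarrow$ and $\leftrightarrow$ are exchanged (vertical edge midpoints go to horizontal edge midpoints). The mismatch is visible directly in the stencils: the first component of $L_\bullet^*$ averages along the first index, whereas the first component of $L_+^*$ averages along the second index. Since the re-indexing $(a,b)\mapsto(b,n-a+1)$ converts variation in the new first index into variation in the old second index, the two-point second-index stencil of $(L_\bullet^*v_\bullet)_2$ is transported precisely onto the two-point first-index stencil of $(L_\bullet^*\,\cdot\,)_1$ --- not onto $(L_+^*\,\cdot\,)_1$, whose stencil runs along the wrong axis. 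Hence the identity $L_{\sigma(s)}^*\bigl(R\,(v_s^c\circ\mathcal{R}^{-1})\bigr) = R\,\bigl((L_s^*v_s^c)\circ\mathcal{R}^{-1}\bigr)$ you intend to check is false for $s=\bullet$ with $\sigma(\bullet)=+$, and the constructed point is not feasible. The paper's construction indeed takes $\bar v_\bullet^c$ from $v_\bullet^c$ and $\bar v_+^c$ from $v_+^c$, swapping only $\updownarrow$ and $\leftrightarrow$ (with a one-pixel shift $(i+1,j)$ in the $\updownarrow$ term to absorb the asymmetry of the forward differences, which is the ``benign index shift'' you anticipate). With $\sigma$ corrected to fix $\bullet$ and $+$, the remainder of your plan goes through essentially as in the paper.
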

\begin{proof}
For any 2D digital image $u\in\mathbb{C}^{n\times n},\, \mathcal{R}u(j,i)=u(i,n-j+1)$. With this in mind,
 assume that $\big((v_s^c)_{c \in \mathcal{C}}\big)_{s\in S}\in \big((\mathbb{C})^{n\times n}\big)^{4N}$ is a solution to problem (\ref{NRITV}), which is known to exist by convexity. Consider the problem 
   \begin{equation} \label{2proposed model}
   \begin{split}
   \min_{\{\bar{v}_s^c:s\in S, c\in \mathcal{C}\}}  
   &\lambda\sum_{s\in S} ||(\bar{v}_s^c)_{c=1}^N||_{1,\star}\\
   \text{s.t. } \sum_{s\in S}  L^*_s &\bar{v}_s^c -  D(\mathcal{R}u_c) = 0;\, \forall c\in \mathcal{C}.
   \end{split}
   \end{equation}
   With some algebraic manipulation, the following solution can be seen to satisfy the constraints in (\ref{2proposed model}) for each $c \in \mathcal{C}$:

\begin{equation*}
	\begin{split}
		&\bar{v}_\bullet^c =(\bar{v}^c_{\bullet,1}, \bar{v}^c_{\bullet,2})=(-\mathcal{R}{v}^c_{\bullet,2}, \mathcal{R}{v}^c_{\bullet,1}),\\ &\bar{v}^c_\updownarrow(i,j) =(\bar{v}^c_{\updownarrow,1}(i,j), \bar{v}^c_{\updownarrow,2}(i,j))=(-\mathcal{R}{v}^c_{\leftrightarrow,2}(i+1,j), \mathcal{R}{v}^c_{\leftrightarrow,1}(i+1,j)),\\
		&\bar{v}^c_\leftrightarrow=(\bar{v}^c_{\leftrightarrow,1}, \bar{v}^c_{\leftrightarrow,2})=(-\mathcal{R}{v}^c_{\updownarrow,2}, \mathcal{R}{v}^c_{\updownarrow,1}),\\ &\bar{v}^c_+=(\bar{v}^c_{+,1}, \bar{v}^c_{+,2})=(-\mathcal{R}{v}^c_{+,2}, \mathcal{R}{v}^c_{+,1}).
	\end{split}
\end{equation*}
It remains to show that 
\begin{equation}\label{eq13}
 \sum_{s\in S} ||({v}_s^c)_{c=1}^N||_{1,\star}=
\sum_{s\in S} ||(\bar{v}_s^c)_{c=1}^N||_{1,\star}.
\end{equation}
For $s=\bullet$, we have
\begin{equation}
\begin{split}
&||(\bar{v}^c_{\bullet})_{c=1}^N||_{1,\star} = \sum_{i,j=1}^n 
\bigg\Vert\begin{pmatrix}
-\mathcal{R}v^1_{\bullet,2}(j,i) & \cdots& -\mathcal{R}v^N_{\bullet,2}(j,i)  \\ \mathcal{R}v^1_{\bullet,1}(j,i) & \cdots
&\mathcal{R}v^N_{\bullet,1}(j,i)\end{pmatrix} \bigg\Vert_{\star}\\&=
\sum_{i,j} 
\bigg\Vert\begin{pmatrix}
-v^1_{\bullet,2}(i,n-j+1) & \cdots& -v^N_{\bullet,2}(i,n-j+1)  \\ v^1_{\bullet,1}(i,n-j+1) & \cdots
&v^N_{\bullet,1}(i,n-j+1)\end{pmatrix} \bigg\Vert_{\star}
\\&=
\sum_{i,j} 
\bigg\Vert\begin{pmatrix}
-v^1_{\bullet,2}(i,j) & \cdots& -v^N_{\bullet,2}(i,j)  \\ v^1_{\bullet,1}(i,j) & \cdots
&v^N_{\bullet,1}(i,j)\end{pmatrix} \bigg\Vert_{\star}
=||({v}^c_{\bullet})_{c=1}^N||_{1,\star},
\end{split}
\end{equation}
where the penultimate equality was deduced from indifference of finite sums with respect to rearrangement of summands, and the final equality was thanks to the definition of nuclear norm and indifference of the characteristic polynomial of $2\times N$ matrices with respect to row-wise sign flips and position swaps. Similarly, one can see that
$
||(\bar{v}^c_{+})_{c=1}^N||_{1,\star}=||({v}^c_{+})_{c=1}^N||_{1,\star}$, 
$||(\bar{v}^c_{\updownarrow})_{c=1}^N||_{1,\star} = ||({v}^c_{\leftrightarrow})_{c=1}^N||_{1,\star}$ and
$||(\bar{v}^c_{\leftrightarrow})_{c=1}^N||_{1,\star}=
||({v}^c_{\updownarrow})_{c=1}^N||_{1,\star}$. Eq. (\ref{eq13}) follows by summing these terms.
\end{proof}

\begin{rem}\label{rem1}
 Invoking the result above twice in a row gives isotropy for $\ang{180}$, and 3 times in a row for $\ang{270}.$      
\end{rem}

Simply put, (\ref{NRITV}) interpolates and reshapes the GF in each individual {contrast} through the constraints, and then sparsifies the set of linearly independent pixel-wise gradient vectors through the cost function, which would eventually force isotropy, edge alignment and preservation of contrast-specific details.
\textcolor{black}{The idea of low-rank modeling is quite popular in current MRI literature (in the Experiments section we will provide comparison with as many such works as we can). For example, crude nuclear-norm gradient-coupling has been suggested for TV and TGV (see \cite{TNV,MR-PET} and references therein) for different applications. NRITV improves upon these regularizers by means of a redefined GF discretization and gauranteed isotropy. This gradient field ($v_s,\,s\in\{\updownarrow\,\leftrightarrow,\bullet,+\}$) is four times as dense as those in TV and TGV, and it is this refinement in GF that allows for spatial isotropy, which was not possible in TV and TGV \cite{Erfan}}. \textcolor{black}{Other low-rank penalties have been published that do not require the direct computation of SVDs; for example, matrix factorizations (PQ and UV in \cite{STDLR} and \cite{ENLIVE}) have been suggested. However, in this work, we present a direct SVD thresholding approach. This results in a convex optimization problem, which admits a global solution and is required for our proof of Proposition \ref{prop1}. }
Proper validation of NRITV's merits will be presented in Section \ref{sec exp} where we will observe that NRITV fulfills all of its theoretical promises in practice. In addition, it turns out that the proposed approach outruns \textcolor{black}{(see Fig. \ref{times})} and outperforms \textcolor{black}{(see Section \ref{sec exp})} the SVD-free method of \cite{STDLR}. We are now ready to put forth the new MC-PI-CS MR image reconstruction model:
\begin{equation} \label{proposed model}
\begin{split}
\min_{\{u_c \geq 0,v_s^c\}}
\frac{1}{2}&\sum_{c=1}^N\sum_{p=1}^P||\mathcal{F}_pu_c-b_{c,p}||_2^2 + 
\lambda\sum_{s\in S} ||(v_s^c)_{c=1}^N||_{1,\star}\\
    \text{s.t. } &\sum_{s\in S} L^*_sv_s^c - Du_c = 0; \quad c\in\{1,\cdots,N\},
\end{split}
\end{equation}
where $P$ is the number of parallel imaging coils, $b_{c,p}\in \mathbb{C}^{n\times n}$ is the zero-filled subsampled k-space data for coil $p$ and contrast $c$, $\mathcal{F}_p$ is the system matrix involving a partial Fourier operator and sensitivity modulation for coil $p$ and $\lambda>0$ 
is a balancing parameter. The restriction $u_c\geq 0$ is a shorthand for $\text{Re }(u_c)\geq0$ and $\text{Im }(u_c)\geq0$ (there is not much meaning to negative pixel values, be it in the real or imaginary part).

  \begin{figure*}\centering   \includegraphics[width=0.88\textwidth]{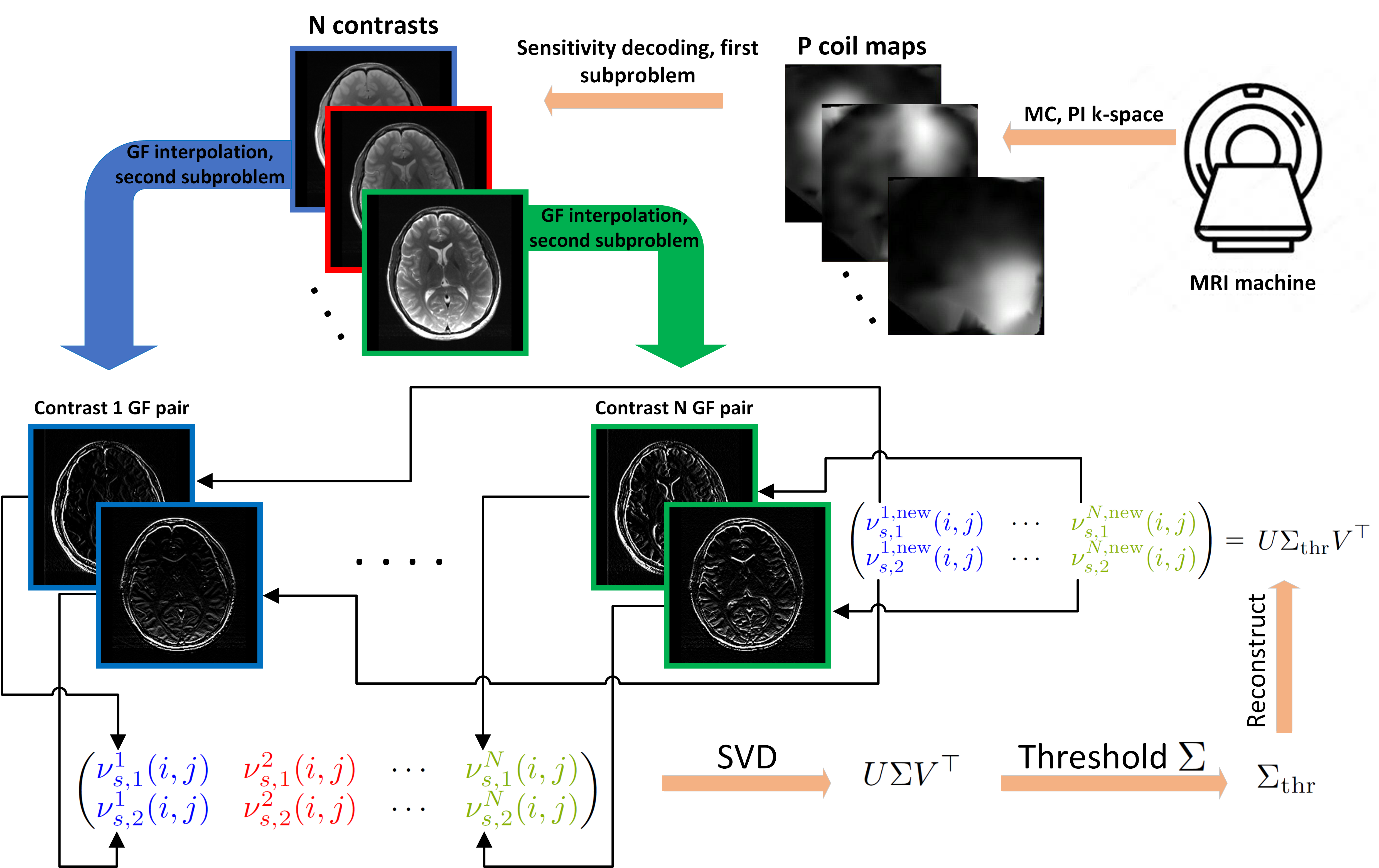}   \caption{Schematic representation of the $u$ and $v$ subproblems of the proposed algorithm. The sensitivity-encoded multi-contrast data are decoded back to the image domain using coil maps (upper part of the figure). Thereupon, the gradient fields of all $N$ contrasts are interpolated using $L_s$, $s\in S$, after which a joint gradient matrix is extracted from all contrasts at every pixel $(i,j)$. The SVD of this matrix is then taken, followed by soft-thresholding of the singular values. The matrix is subsequently reconstructed and each new GF value returns to its original position. The extraction-thresholding loop runs until $i,j$ and $s$ are exhausted. 
	}\label{fig1}       \end{figure*}

Problem (\ref{proposed model}) is a convex minimization, which can be recast as a convex-concave saddle point problem:

\begin{equation}
	\begin{split}
	\min_{u_c,v_s^c} \max_{h_c,r_{c,p}}\, \lambda \sum_{s\in S} &||(v_s^c)_{c=1}^N||_{1,\star}+ \sum_c \sum_p
		\langle \mathcal{F}_p u_c -b_{c,p}, r_{c,p}\rangle\\+\sum_c &\delta_{\mathbb{R}^n_+}(u_c) - \frac{1}{2} \sum_c \sum_p||r_{c,p}||_2^2 
		+ \sum_c\langle \sum_{s\in S} L^*_sv_s^c -Du_c, h_c \rangle, \label{proposed model 2}
	\end{split}
\end{equation}
where $r_{c,p}\in \mathbb{C}^{n\times n}$ and $h_c \in \mathbb{C}^{2(n\times n)}$ are frequency and GF  elements and $\delta_A(u)$ is the indicator function of the convex set $A$. The primal and dual variables $x$ and $y$ and the corresponding cost functions $g(x)$ and $f^*(y)$ are respectively set as $x^{\top}:=(u_c,v^c_{\updownarrow},v^c_{\leftrightarrow},v^c_{\bullet}, v_+^c)_{c=1}^N$, $y^\top:=\big((r_{c,p},h_c)_{c=1}^N\big)_{p=1}^P$,
\begin{align}
	g(x) &:=\sum_c \delta_{\mathbb{R}^n_+}(u_c)+ \lambda \sum_{s\in S} ||(v_s^c)_{c=1}^N||_{1,\star},\label{g}\\
	f^*(y) &:= \frac{1}{2}\sum_c \sum_p||r_{c,p}||_2^2 +\sum_c \sum_p\langle b_{c,p},r_{c,p} \rangle ,\label{f*}
\end{align}    
together with 
\begin{equation}
	K  := 
	\begin{pmatrix}
		\mathcal{F} & 0\\
		- \mathcal{D} & \mathcal{L^*} 
	\end{pmatrix},
\end{equation} 
where $\mathcal{D}$, $\mathcal{L^*}$ and $\mathcal{F}$ are all $N$-block-diagonal matrices whose diagonal entries are respectively  $D$,
$[L^*_{\updownarrow} \, L^*_{\leftrightarrow} \, L^*_{\bullet}\,  L^*_+]$ and the concatenation of all coils' system matrices.
We can now develop the Malitsky-Pock algorithm \cite{MP} for (\ref{proposed model 2}).
  \subsection{Algorithm}

 \subsubsection{The $u_c$ subproblem} 
Once again assuming that $u_c$ stands for both real and imaginary parts, update for  $u_c$ is given by:
 \begin{equation}
 	\begin{split}
 u^{k+1}_c &= \text{prox}_{\tau_{k}\delta_{\mathbb{R}^n_+}} \Big(u^{k}_c-\tau_{k}\big(\sum_p\mathcal{F}_p^*(r^{k}_{c,p})-D^*(h^{k}_c)\big) \Big) = \max \big\{ u^{k}_c-\tau_{k}\big(\sum_p\mathcal{F}_p^*(r^{k}_{c,p})-D^*(h^{k}_c)\big),0\big\}.\label{u solution}
	\end{split} 
\end{equation}
Data acquired from k-space are sensitivity-encoded in the Fourier domain. This subproblem thus carries out sensitivity decoding in image domain. However, to get a good result, senitivity maps used in $\mathcal{F}_p^*$ must be sufficiently accurate. In most of the experiments, we will approximate the maps by an estimation algorithm such as ESPIRiT \cite{ESPIRiT}.
 \subsubsection{The $v^c_s$ subproblem}  
 This part is the core of the proposed algorithm and the crossroads where all $N$ {contrast}s exchange information and joint GF update takes place. It is well known that for a separable function, the proximal operator decouples into a concatenation of the proximities of the function components \cite[Remark 6.7]{Beck}. Invoking this result multiple times and setting $(\nu_s^{c,k})_{c=1}^N := (v_s^{c,k})_{c=1}^N - \tau_{k} (L_s(h^{k}_c))_{c=1}^N$, we obtain for each $s\in S$:
\begin{align}\label{v update}
	(v_s^{c,k+1})_{c=1}^N &= \text{prox}_{\tau_{k} \lambda ||\cdot||_{1,\star} } ((\nu_s^{c,k})_{c=1}^N) \notag
	= \Big(\text{prox}_{\tau_{k} \lambda||\cdot||_{\star}} \big(\nu_s^{c,k}(i,j)\big)_{c=1}^N \Big)_{i,j=1}^n\notag\\ 
	& = \Bigg( \text{prox}_{\tau_{k} \lambda||\cdot||_{\star}}
	\begin{pmatrix}
		\nu^{1,k}_{s,1}(i,j) & \cdots & \nu^{N,k}_{s,1}(i,j)\\
		\nu^{1,k}_{s,2}(i,j)& \cdots & \nu^{N,k}_{s,2}(i,j)
	\end{pmatrix} \Bigg)_{i,j=1}^n.\end{align}
Therefore, for fixed $i,j$ and $s$, SVD of the joint pixel-wise GF matrix should be computed, followed by soft-thresholding of the singular values by $\tau_k \lambda$ and reconstructing the new GF matrix. The procedure is schematically represented in Fig. \ref{fig1}.

\begin{algorithm} 
	\caption{Proposed method for compressed multi-contrast parallel MRI}
	\begin{algorithmic}[1]\label{proposed algorithm}
		\renewcommand{\algorithmicrequire}{\textbf{Initialization:}}
		\renewcommand{\algorithmicensure}{\textbf{Output:}}
		\REQUIRE Choose $\theta_0 = 1,\, u^0_c = u^\text{zf}_c,\, v_s^{c,0}=0$  $ h^0_c=0,\,r^0_c=0\text{ for } s\in S\text{ and } c\in \mathcal{C}, $  $ \tau_0>0, \beta>0,\, \mu \in (0,1)$ and $ \delta\in (0,1)$. Define $\mathcal{P} := \{1,\cdots P\}.$ 
		\\ \textit{\textbf{While convergence criterion not met, repeat:}}
		\STATE \label{step 1} $\text{Obtain } u_c^{k+1} \text{ from (\ref{u solution}) for each }c\in \mathcal{C} ;$
		\STATE $\text{Obtain }( v_s^{c,k+1})_{c=1}^N \text{ jointly from (\ref{v update}) for each } s\in S ;$
		\STATE Choose $\tau_{k+1} \in [\tau_{k}, \tau_{k}\sqrt{1+\theta_{k}}];$
		\\ \textit{\textbf{Linesearch:}}
		\STATE $\theta_{k+1} = \frac{\tau_{k+1}}{\tau_{k}};$ \label{linesearch}
		\STATE $\bar{u}^{k+1}_c = u_c^{k+1} + \theta_{k+1}(u^{k+1}_c-u^{k}_c),\, \forall c \in \mathcal{C};$
		\STATE $\bar{v}^{c,k+1}_s = v_s^{c,k+1}+ \theta_{k+1}(v_s^{c,k+1}-v_s^{c,k}),\,\forall s\in S,\, c\in \mathcal{C};$
		\STATE $ \text{Obtain }r^{k+1}_{c,p} \text{ from (\ref{r update}) for each } c\in \mathcal{C} \text{ and } p\in \mathcal{P};$
		\STATE $ \text{Obtain } h^{k+1}_c
		\text{ from (\ref{h update}) for each } c\in \mathcal{C};$
		\IF {$\sqrt{\beta}\tau_{k+1}||  \big((\mathcal{F}^*_{p}(r^{k+1}_{c,p}-r^{k}_{c,p}))_{c,p}, (L_s (h_c^{k+1}-h_c^{k}))_{s,c}\big)||_2$ $\leq \delta||((r^{k+1}_{c,p}-r^{k}_{c,p},h^{k+1}_c-h^{k}_c))_{c,p}||_2  $}
		\STATE Return to step \ref{step 1} (break linesearch),
		\ELSE
		\STATE Set $\tau_{k+1} = \mu \tau_{k+1}$ and return to step \ref{linesearch} (apply another iteration of linesearch).
		\ENDIF
		\ENSURE  Reconstructed stack of multi-contrast MR images $(u_c)_{c=1}^N$, solution to (\ref{proposed model}).
	\end{algorithmic} 
\end{algorithm}

\subsubsection{The $y$ subproblem}
This subproblem encompasses $r_{c,p}$ and $h_c$ updates, given respectively by
\begin{align}
	r^{k+1}_{c,p} &=\text{prox}_{\beta \tau_{k+1} (\frac{1}{2}||\cdot||_2^2 +\langle \cdot,b_{c,p} \rangle ) }(r^{k}_{c,p}+\beta \tau_{k+1} \mathcal{F}_p(\bar{u}_c^{k+1})) =\frac{r^{k}_{c,p}+\beta \tau_{k+1} \mathcal{F}_p(\bar{u}_c^{k+1}) - \beta \tau_{k+1}b}{1+\beta \tau_{k+1}}\label{r update} , \\
	h^{k+1}_c &= h^{k}_c + \beta \tau_{k+1} (-D(\bar{u}_c^{k+1}) + \sum_{s\in S}L_s^*(\bar{v}_s^{c,k+1}))\label{h update}.
\end{align}

Fig. \ref{fig1} gives a simplified representation of some major steps in the proposed algorithm.
A more rigorous and formal  description is detailed in Algorithm \ref{proposed algorithm}.  

\section{Experiments}\label{sec exp}

\subsection{Setup} \label{sec setup}
\subsubsection{Hardware and software}\label{Hardware}In this section the proposed framework will be evaluated. Real-valued phantoms as well as complex-valued in-vivo data are considered for experiments. 
 {Based on software availability and reproducibility, the following state-of-the-art methods have been considered for comparisons: from SC-SC literature \textbf{RITV+BM3D} (briefly RITV) \cite{Erfan}; from PI literature \textbf{SAKE} \cite{SAKE}, \textbf{L1-ESPIRiT} \cite{ESPIRiT}, \textbf{SENSE-LORAKS} (briefly S-LORAKS) \cite{S-LORAKS}, \textbf{STDLR-SPIRiT} \cite{STDLR}; from MC-PI literature \textbf{J-LORAKS} \cite{J-LORAKS}, \textbf{MC-TV-SENSE-nuclear} (briefly TV-nuc) \cite{Itthi,TNV} and \textbf{MC-TGV-SENSE-nuclear} (briefly TGV-nuc)} \cite{Itthi,MR-PET}. TV-nuc and TGV-nuc were based on the codes of \cite{Itthi}, but the anistropic vectorial TV and TGV in \cite{Itthi} were respectively replaced by nuclear-norm TV \cite{TNV} and nuclear-norm TGV \cite{MR-PET} to boost TV and TGV performances.
 The data and software reproducing the proposed method are publicly available at \cite{Mend}.   
The codes for other methods were also shared by their authors.   
Experiments were conducted in MATLAB R2020b (MathWorks, Natick, Massachusetts, USA) mostly on a laptop with an AMD FX-7600 Radeon R7 CPU at 2.70 GHz clock speed, AMD R9 M280X GPU with 4GB of memory and 8GB of RAM. However, due to higher computational demand by STDLR-SPIRiT at size $n=200$, two more machines with 16 and 32 GB of RAM were devoted exclusively to this method at $n=200$. The quality of reconstructions is quantified by SSIM (MATLAB's built-in function) and $\text{SNR}(u,u_{\text{true}}) := 20\log\frac{||u||_2}{||u-u_{\text{true}}||_2}$.

\subsubsection{Sampling and calibration}
Cartesian sampling patterns with fairly aggressive reduction factors of $R=5,7$ and $9$ have been primarily used in the experiments (with $R:= \nicefrac{n^2}{\#\text{samples}}$, so that for example, $R=7$ corresponds to about $14\%$ sampling). This is for two reasons: first, Cartesian patterns are by far the most popular patterns in actual clinical applications and we want to remain as close as possible to clinical protocols, and secondly, with highly incoherent patterns \cite{CS MRI} like variable density random, spiral or Poisson disk, most MC-PI methods produce almost ground-truth quality results (even at aggressive reduction factors), \textcolor{black}{in which case comparison of the methods would not show much difference between them. For example, multi-contrast TV and TGV in \cite{Itthi} are both reported to produce SSIM above $0.98$ for Poisson disk undersampling even at $R=15$.} Fig. \ref{Supp_radial} compares reconstructions from a non-Cartesian pattern. {The Cartesian patterns were made by fixing $40\%$ of the samples symmetrically at central k-space (the ACS region) and selecting the remaining $60\%$ randomly along the phase encode direction.} {A typical pattern is shown in the top right corner of Fig. \ref{lp}. Although shifting the sampling pattern across contrasts is an option, such a strategy did not show worthwhile empirical improvements for any of the methods considered here, hence we always use a fixed pattern.}
{In all experiments, perfect coil sensitivity maps were used in SENSE-LORAKS, MC-TV-SENSE-nuc and MC-TGV-SENSE-nuc to give these methods an extra advantage. However, the proposed method mostly used only a harsh estimation of the sensitivity maps provided by ESPIRiT \cite{ESPIRiT}. This is because we wanted to test the robustness of our approach against inaccuracies of sensitivity estimations which exist in actual situations as often as not. To asses  inaccuracy of the estimated maps, relative $l_2$ norm error (RLNE) is used, defined as usual: $\text{RLNE}(\tilde{\mathcal{S}}) = \frac{||\mathcal{S}-\tilde{\mathcal{S}}||_2}{||\mathcal{S}||_2}$ where $\mathcal{S}$ and $\tilde{\mathcal{S}}$ are vector-stacked perfect and estimated sensitivity maps, respectively. The maps used in NRITV were mostly estimated real-time by ESPIRiT directly from the ACS region in the acquired undersampled (noisy) data without simulating separate calibration scans. This harsh approach resulted in $14\%$ to $40\%$ RLNE in sensitivity estimation for NRITV (as opposed to $0\%$ for the SENSE-type methods mentioned earlier). 
} 

\subsubsection{Parameter settings}
 \textcolor{black}{Parameters of compared methods were carefully tuned to (nearly) \textcolor{black}{optimize} their performance based on SSIM, SNR and visual perception. 
 Tuned parameters for compared methods in each experiment are detailed in the appendix tables (and any parameters not reported there keep their script defaults).  Moreover, the process of parameter optimization is exemplified in Fig. \ref{param_opt}.}
 The parameters for the proposed method are $\{\beta = 4\times 10^{-5}, \lambda = 7\times 10^{-5}, \mu = 0.7, \delta = 0.99 ,\theta_0=1,\,\text{max iterations}=  300\}$ but not rigorously optimized. Unless dealing with noisy data, in which case only $\lambda$ needs readjustment, the user is not required to change these settings. 
  \textcolor{black}{However, the interested user may optimize the parameters for this method, for example, using the NRITV software available at \cite{Mend}.}

 \subsection{Results} \label{results}

 \subsubsection{Phantom experiments}  
 
 \begin{figure*}\centering
 \includegraphics[width=\linewidth]{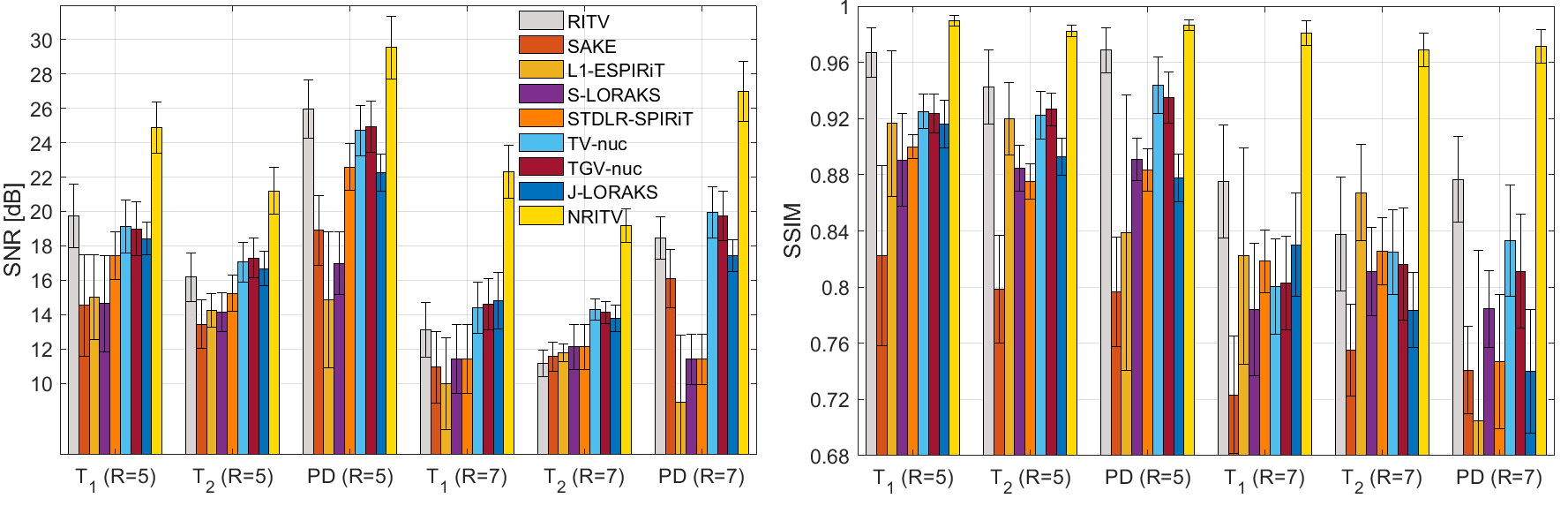}
 \caption{Performance of all compared works in the phantom data in terms of SNR (left) and SSIM (right).}
 \label{BW_bars}
 \end{figure*}
 \begin{figure*} 
	\includegraphics*[width = \textwidth]{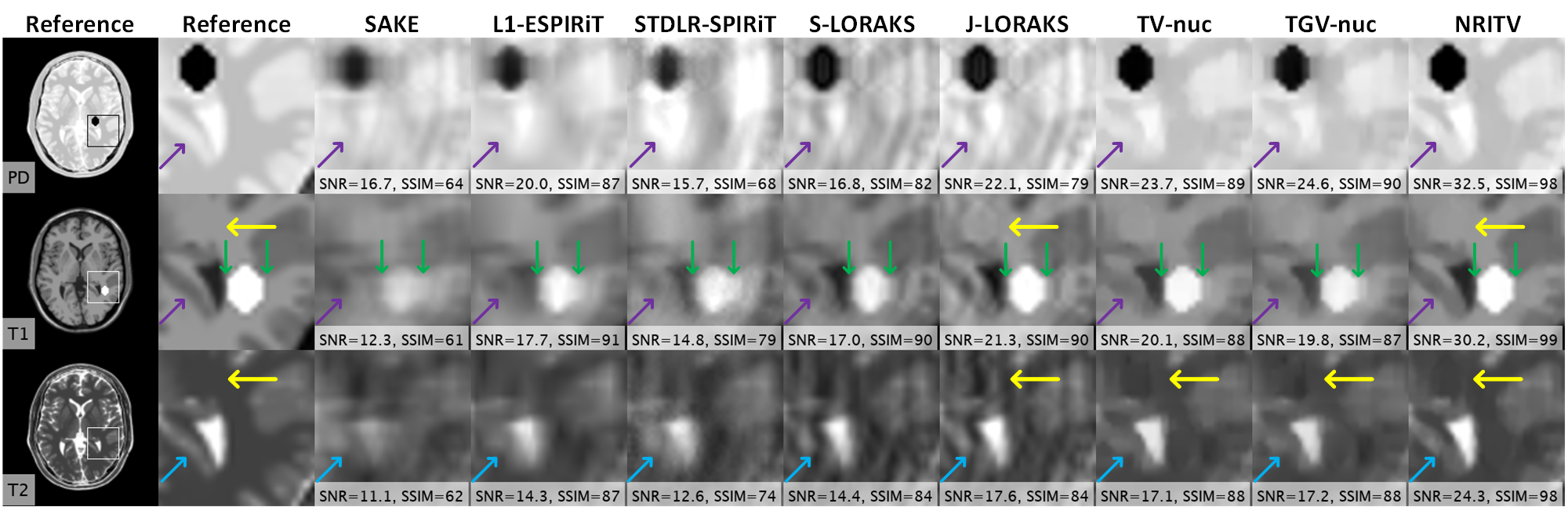}
	\caption{Magnified views of the results for the brain phantom at $R=7$. Prominent errors in the competing methods and their corrections with the proposed one are marked by arrows. {Appendix Fig. \ref{Supp_Lesion} shows full size FOVs and error maps for this experiment.}}
	\label{showpice} \end{figure*}  
\begin{figure*}\centering
	\includegraphics[width=0.7\linewidth]{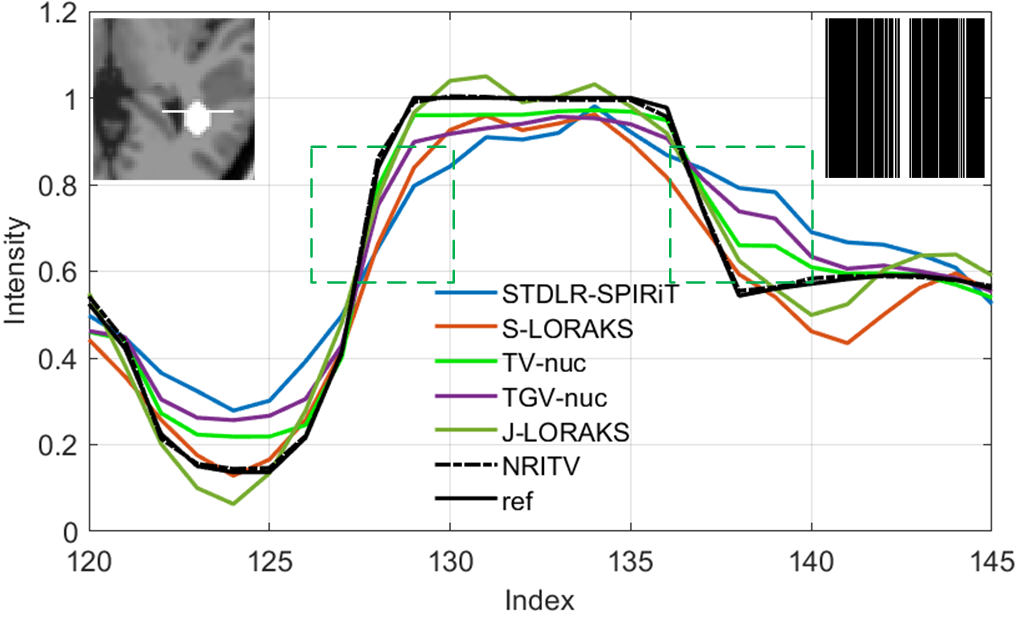}
	\caption{Line profiles through the $T_1$-only lesion in the top 6 results of Fig. \ref{showpice}. The dashed green box on the left shows a sharp transition in intensity on the left side of the lesion, perfectly captured by NRITV, fairly by the others;  and the right box shows the mirror transition on the right side of the lesion, again perfectly captured by NRITV but poorly by the others. The dashed geen boxes correspond to the green arrows in Fig. \ref{showpice}. {The sampling pattern used in this test is shown on the top right corner.}}
	\label{lp}
\end{figure*} 
 
 In order to form an initial idea on the performance of the proposed framework, 60 multi-contrast slices distributed into triplets of $T_1$-weighted, $T_2$-weighted and proton-density-weighted scans of the normal brain were obtained from a realistic brain phantom \cite{BW}. The triple-contrast 8-coil phantom data were resized to $n=80$ for computational practicality, retrospectively undersampled at $R=5,7$ and reconstructed by all 9 methods. {NRITV used inaccurate ESPIRiT-based sensitivity maps with an average RLNE of $23\%$ over the whole dataset (using only the $6$ and $4$ ACS lines respectively at $R=5$ and $R=7$ with fixed calibration settings). Optimized parameters for competing methods are detailed in Table \ref{Table_BWR5} and Table \ref{Table_BWR7}.} The entire output data were too large to be shown. Instead, average $\pm$ standard deviation values of SNR and SSIM for all considered methods are reported in Fig. \ref{BW_bars}, which can be interpreted as follows:
\begin{itemize}
\item
 Although RITV does not benefit from any form of multi-coil or multi-contrast redundancy, it is comparable with such methods and outperforms some of them. 
 \item All PI and MC-PI methods (excluding NRITV)  \textcolor{black}{show some instabilities}, each doing well in some experiments by some metric while under-performing in another setting. The comparison of these methods may be considered inconclusive as none strongly outperforms the others.
 \item The proposed method on the other hand \textcolor{black}{strongly} outperforms all other methods.
 \end{itemize}

\begin{rem}\label{rem2}
{Henceforth we shall focus on visual comparison of the methods, first on phantom, then on in-vivo data. As such, the size is set to $n=200$. To save space, only some magnified regions in reconstructed images are selected and visually compared here. Full-size FOVs and other details can be viewed in the appendix.}
\end{rem}
 Fig. \ref{showpice} illustrates an example from the BrainWeb phantom test set. A sampling pattern with the reduction factor $R=7$ {(see the top right corner in Fig. \ref{lp})} along with 4 coils were used. {The coil maps used in NRITV were directly obtained from the ACS region using ESPIRiT with $19\%$ RLNE.} In order to simulate the situation where a lesion is observable in one contrast but invisible in others, a black hexagon in the PD contrast and a white hexagon in the $T_1$ contrast were superimposed to the images. Results show several blatant errors in compared methods, some of which are marked by arrows. SNR (in dB) and SSIM (in $\%$) are also attached to the bottom of each reconstruction.

 The purple and blue arrows in Fig. \ref{showpice} point to the posterior horn of lateral ventricle and the nearby grey matter which are over-smoothed and poorly captured by  compared methods but flawlessly with NRITV.

 The green arrows in the $T_1$ results in Fig. \ref{showpice} point to the side edges of the $T_1$-only lesion. While some of the compared methods show acceptable results on the left edge, they all blur the right edge. In other words, the right-side edge of the lesion is not as sharply reconstructed as the left one. However, with NRITV the edges are equally sharp on all sides. This shows the significance of isotropy in regularization, because the lesion is symmetric and invariant under $\ang{180}$ rotation (see also Remark \ref{rem1}), hence a similar result on both sides is expected. This observation is also in line with those made by Condat \cite{Condat}, where it was discussed that TV does not sufficiently penalize oblique edges, leading to non-isotropy. The same could be said of TGV, because the GF used in TGV is essentially the same as TV. Other methods lead to a similar artifact. Fig. \ref{lp} highlights this further with line profiles. Due to symmetry in the lesion, intensity transitions when the line enters the lesion from the left side in the reference image (about index $127$, marked by the left green dashed box) is almost the same as when the line leaves the lesion on the right side (about index $137$, marked by the right green dashed box). The compared methods fail to capture the intensity jump in the right side due to a lack of isotropy, while NRITV is almost indistinguishable from the symmetric reference profile.

  The yellow arrows in Fig. \ref{showpice} 
  point to certain locations in the images where a leakage from another contrast is expected. It is evident that the black lesion from the PD contrast has leaked slightly into J-LORAKS' $T_1$ result. On the other hand, TV-nuc and TGV-nuc have suppressed this leakage in $T_1$ contrast almost as effectively as NRITV. However, the same lesion has left some faint residuals in the $T_2$ result with TV-nuc and TGV-nuc which are cleared out by NRITV.

 Fig. \ref{showpice} exemplifies some of NRITV's merits, namely, retention of contrast-specific features, isotropy and substantial improvement over other methods despite sensitivity maps being perfectly known to them but inaccurately known to NRITV. { Full-size results for all methods along with error maps can be viewed  in Fig. \ref{Supp_Lesion}.  Optimized parameters for compared methods are reported in Table \ref{Table_phantom}. }

\subsubsection{In-vivo experiments}
To further demonstrate usefulness of the proposed strategy, experiments on in-vivo MR data are also carried out. All images in this subsection are complex-valued (magnitudes are shown). Various anatomies have been chosen to investigate versatility of the proposed approach.

 \begin{figure*} 
     \includegraphics[width = \textwidth]{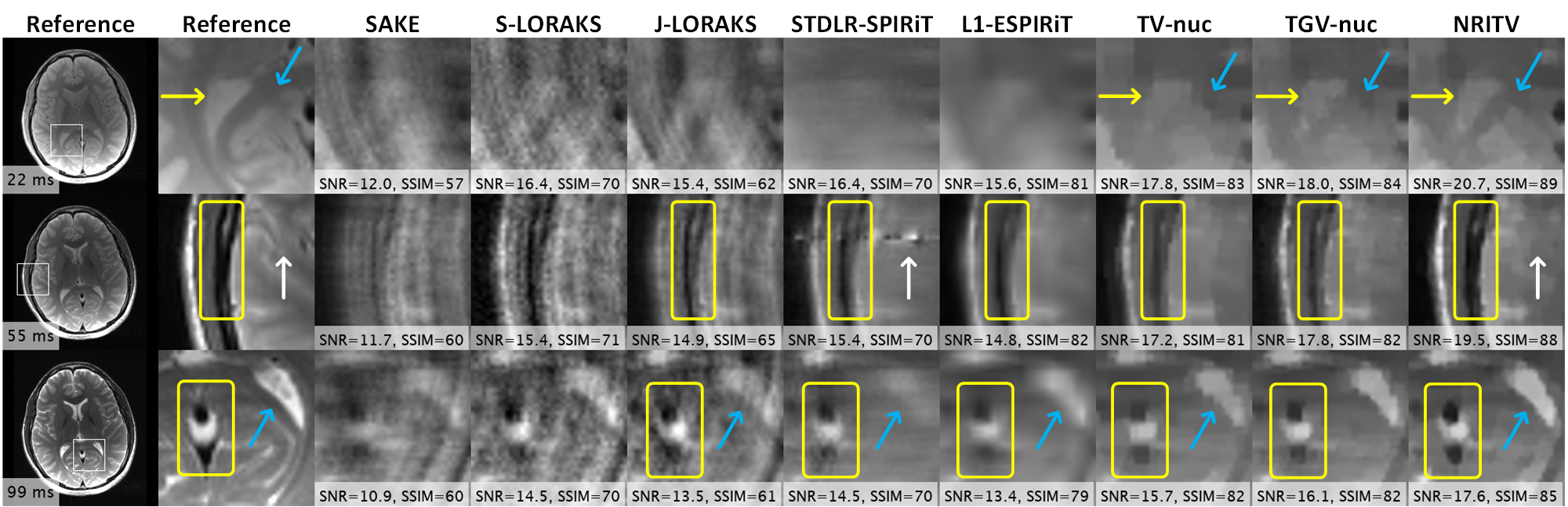}
    \caption{Magnified views of the results for the brain data at $R=7$ and $\sigma=0.02$. Some errors in the competing methods and their corrections with the proposed one are marked. {Appendix Fig. \ref{Supp_Itthi} shows full size FOV's, error maps and the sampling pattern for this experiment.}}
    \label{showpice_Itthi} \end{figure*} 

Fig. \ref{showpice_Itthi} shows reconstructions for a triple-contrast in-vivo brain dataset from  \cite{Itthi} graciously provided by Dr. Chatnuntawech. The data had been collected from a healthy volunteer with Institutional Review Board
approval and informed consent at 3T (MAGNETOM Trio, A Tim
System, Siemens, Erlangen, Germany) using a turbo spin-echo
(TSE) sequence with the important parameters: FOV = $22 \times 22 \text{ cm}^2$, voxel size = $0.9\times0.9\times3 \text{ mm}^3$,
 repetition time (TR) = $4$ s and echo times (TE) = $22,55,99$ ms with 32 receive coils. However, for speed, 
 ESPIRiT's coil compression was used to reduce this number to 8. 
 White Gaussian noise with standard deviation $\sigma=0.02$ was added, then k-space was retrospectively undersampled at $ R=7$. Fig. \ref{showpice_Itthi} shows that SAKE and the LORAKS-based solutions have failed to denoise and de-alias the data. STDLR-SPIRiT, L1-ESPIRiT, TV-nuc and TGV-nuc output over-smoothed solutions (e.g. blue arrows in the $22$- and $99$-ms contrasts pointing to the temporal horns and white matter, yellow boxes encompassing the straight sinus in the $22$- and $99$-ms contrasts). Loss of contrast at skull is observed with the same methods (yellow boxes in the $55$-ms contrast). NRITV resolves all these drawbacks and finds the best balance between noise removal and detail retention. {Full-size views for all methods along with error maps and sampling pattern can be observed in Fig. \ref{Supp_Itthi}.  Selected parameters for the competing methods are given in Table \ref{Table_Ithhi}. NRITV used $\lambda =3\times 10^{-3}$ and ESPIRiT-based maps with $22\%$ error.}
 
 \begin{figure}\centering \includegraphics[width=0.6\linewidth]{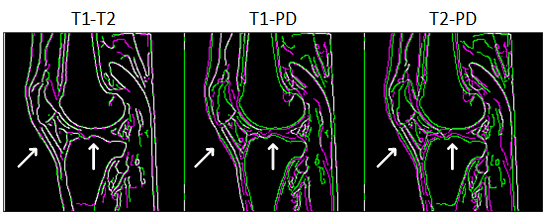}\caption{Edge detection for the triple-contrast knee scan. Each image shows the superimpose of the edges for 2 contrasts. White curves show aligned edges while disjoint green and blue curves indicate mismatch. $T_1$ and $T_2$ contrasts are matched whereas the PD contrast is misaligned with them.} \label{mismatch} \end{figure}

  A triple-contrast knee dataset from \cite{JGBRWT} is used in our next experiment. The data was acquired from GE 3 Tesla
 scanner (Discovery MR750W, USA) with important parameters
  TR = $499$ ms and TE = $9.63$ ms for TSE $T_1$-weighted contrast; TR =
 $2435$ ms and TE = $49.98$ ms for $T_2$-weighted contrast and
 TR = $2253$ ms and TE = $31.81$ ms for PD-weighted contrast with FOV = $180 \times 180 \text{ mm}^2$  and
 slice thickness = $4$ mm. 
 As is the way with many MC free-breathing sequences, some degree of involuntary patient movement is anticipated in this dataset. Canny's edge detection algorithm \cite{Canny} was used to identify the location of the edges in all three contrasts. In each of the three panels of Fig. \ref{mismatch}, the results of edge detection for each of the three possible pairs of images is shown. Disjoint green and blue curves represent mismatched edges, whereas matched edges are colored white. It is clear that $T_1$ and $T_2$ contrasts almost perfectly match but the PD contrast is misaligned with them. The arrows in Fig. \ref{mismatch} mark example regions where the PD-weighted contrast is misaligned with the other two contrasts. 
This dataset is used to see if the proposed method can sustain some degree of misalignment. To further complicate the situation, white Gaussian noise with $\sigma=0.02$ was added and k-space was undersampled at $R=5$. The results are magnified in Fig. \ref{showpice_Knee}. SAKE and LORAKS-based methods are again dominated by noise while other compared methods lose contrast (yellow boxes encompassing the patellar tendon in the PD results) and fine details (blue arrows marking the border between tibia and infrapattelar fat in $T_1$ and $T_2$ contrasts). NRITV effectively removes noise without causing any loss of such details. {Full-size FOVs, error maps and the sampling pattern related to this experiment are shown in Supplementary Fig. \ref{Supp_knee}. NRITV used $\lambda=5\times 10^{-3}$ while the parameters of the other methods are presented in Table \ref{Table_knee}. ESPIRiT estimated the maps used in NRITV with $14\%$ RLNE.}

 \begin{figure} 
	\includegraphics[width = \textwidth]{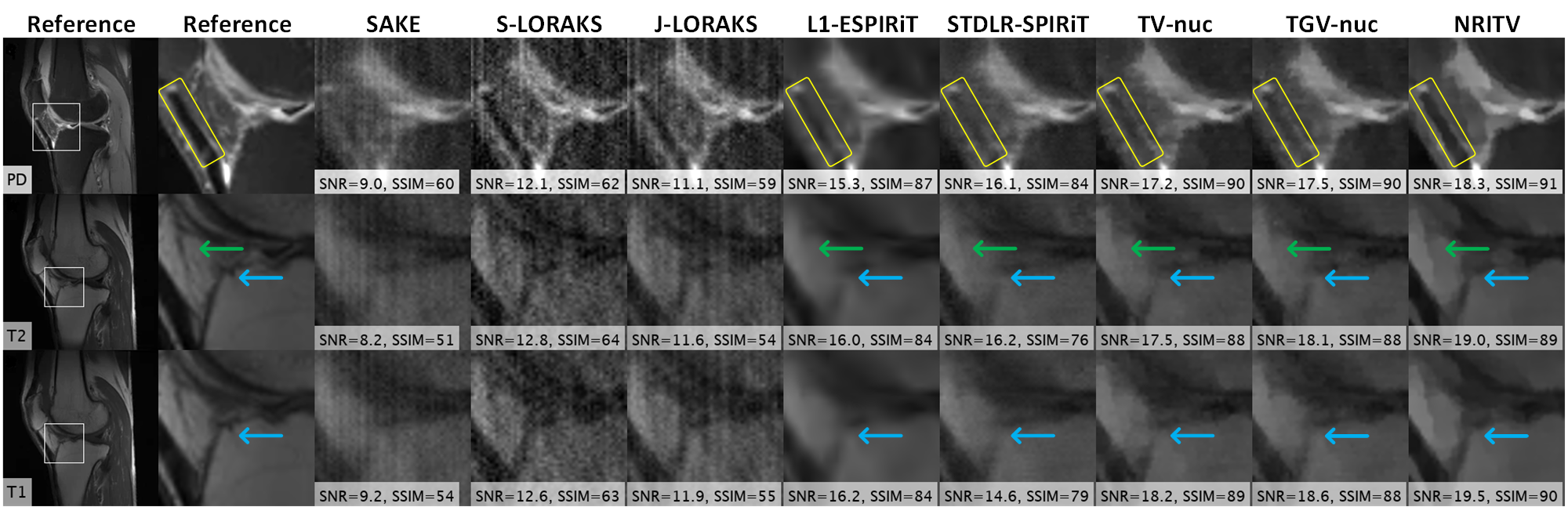}
	\caption{Magnified views of the results for the knee data at $R=5$ and $\sigma = 0.02$. Some errors in the competing methods and their corrections with the proposed one are marked by arrows and boxes. {Appendix  Fig. \ref{Supp_knee} shows full size FOVs, error maps and the sampling pattern for this experiment.}}
	\label{showpice_Knee} \end{figure}
        
 \begin{figure*}           \includegraphics[width = \textwidth]{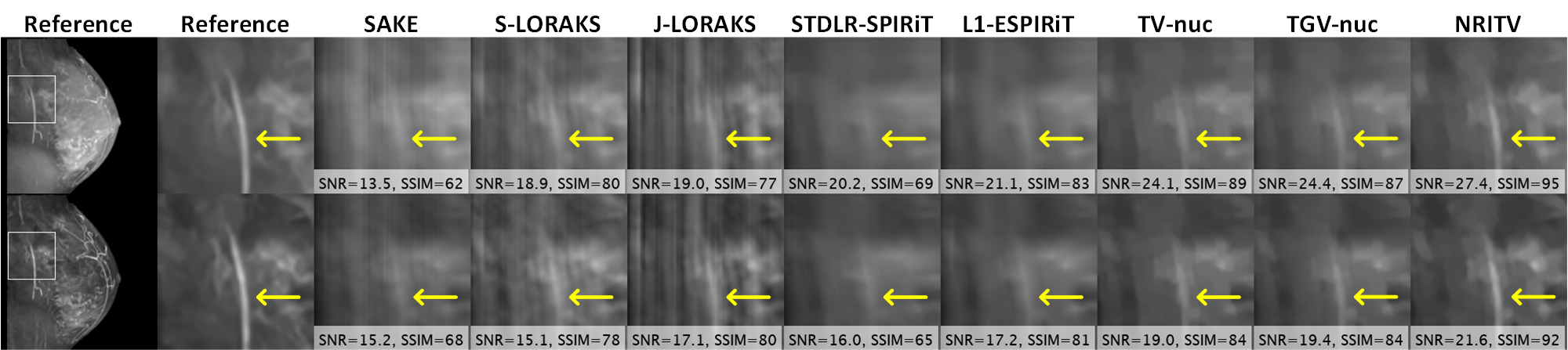}          \caption{Magnified views on 2 contrasts of the 4-contrast breast data at $R=9$. An example error in the competing methods and its correction in the proposed one is marked with arrows. {Appendix  Fig. \ref{Supp_breast} shows full size FOVs for all 4 contrasts, error maps and the sampling pattern in this experiment.}}          \label{showpice2} \end{figure*} 

The next experiment involves reconstruction of a 4-contrast breast dataset from Cancer Imaging Archive \cite{CIA} which was acquired 
from a 33 year old patient on a 1.5T MRI machine (GE Healthcare, Chicago, Illinois, U.S.). Sequence parameters can be seen in \cite{CIA}. 
The 5-coil data was retrospectively undersampled at $R=9$ and reconstructed by all methods.  Fig. \ref{showpice2} shows magnified views on the first 2 contrasts. The arrows in Fig. \ref{showpice2} mark a contrast-enhanced vessel which is highly aliased by LORAKS methods and poorly captured by the other compared works but sharply visible with the proposed one. { Fig. \ref{Supp_breast} shows full FOVs for all 4 contrasts reconstructed by all methods. Table \ref{Table_breast} gives the parameters for compared methods. In this experiment, ESPIRiT failed to produce an acceptable estimation of sensitivity maps, hence NRITV used the same maps as the SENSE-type methods.}

  
\section{Discussion and Conclusions}\label{sec concl}

\begin{figure}\centering \includegraphics[width=0.55\linewidth]{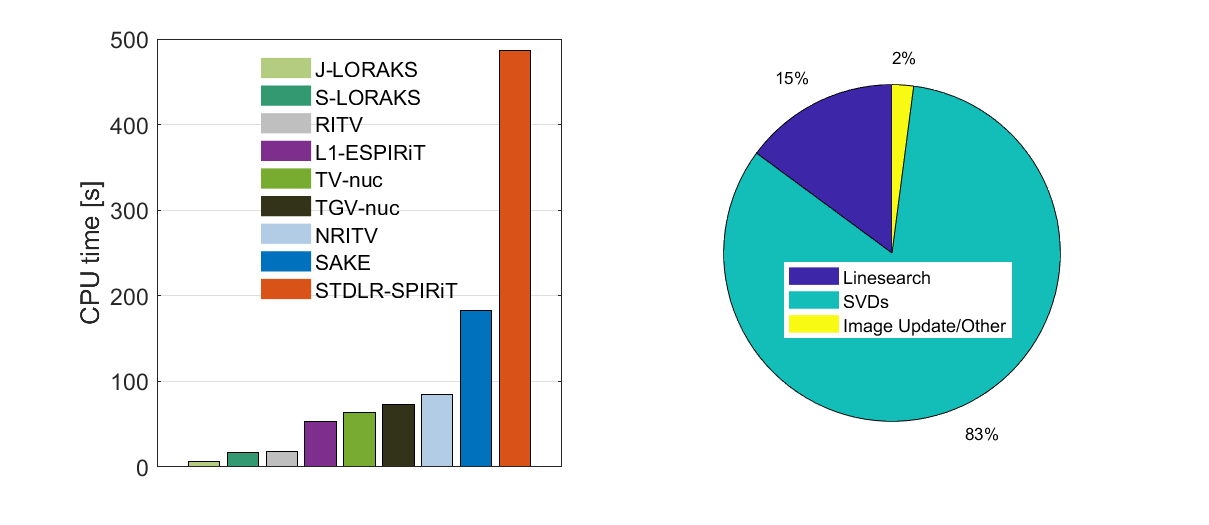}           \caption{Average CPU times required by different methods at size $n=80$ (left) and CPU time analysis for the proposed algorithm at $n=200$ (right). STDLR-SPIRiT is by far the most expensive method despite bypassing SVDs. On the other hand, \textcolor{black}{most} of the computational effort in the proposed algorithm is taken by SVDs, which can be optimized by parallel computing and C/C++ re-implementation of SVDs.  } \label{times} \end{figure}
This article introduced a novel isotropic multi-{contrast} image regularizer within the context of compressed multi-contrast parallel MRI and a primal-dual algorithm with linesearch was proposed to solve the new model. It was observed that this method significantly improves upon state-of-the-art approaches by virtue of inherent isotropy, aligned edges and leakage-prevention of {contrast}-specific details. Robustness of the proposed method was validated with in-vivo data where imperfections such as inaccurate estimations of coil sensitivity maps, high levels of noise, aggressive undersampling or cross-contrast mismatch due to patient motion did not impact the performance of the proposed strategy, while other state-of-the-art approaches broke down.       
The proposed method may carry over to other multi-{channel} image tasks such as color image processing, as well as other multi-modality medical imaging applications such as PET-CT, MR-PET, PET-SPECT etc. 
The proposed method might as well have been called NRITV-ESPIRiT following the traditions in the literature. However, such a title would undermine the fact that NRITV is not reliant on ESPIRiT and any acceptable estimation of coil maps could be used in NRITV.
Currently, the proof-of-principles implementation of the proposed method takes about $6$ seconds to run per iteration on 4 contrasts at size $n=200$. This can improve by an optimized, parallelized C++ transfer of the codes, particularly the SVDs which take most of the computational time 
(see Fig. \ref{times}).

	\begin{table} \centering	
		\caption{Memory cost for variational methods}
	\color{black}	\begin{tabular}{c c c c}
	\toprule
			method & primal variables & dual variables & total memory cost\\
			\midrule NRITV & $u_{n\times n}, (v_s)_{n\times n \times 4}$ & $r_{n\times n}, h_{n\times n \times 2}$ & $8$ matrices of size $n\times n$	\\
			TGV \cite{TGV MRI} & $u_{n\times n}, v_{n\times n \times 2}$ & $r_{n\times n}, p_{n\times n \times 2}, q_{n\times n \times 3}$ & $9$ matrices of size $n\times n$	\\
			TV \cite{TGV MRI} & $u_{n\times n}$ & $r_{n\times n}, p_{n\times n \times 2}$ & $4$ matrices of size $n\times n$	\\ \bottomrule\label{TableMemoryCost}\end{tabular} \end{table}  
\textcolor{black}{	Moreover, since NRITV is in the same family of variational regularizers as TV and TGV, it might be interesting to compare them in terms of memory cost. Table \ref{TableMemoryCost} presents the memory cost for the three variational methods, TV, TGV and NRITV. Comparing Table \ref{TableMemoryCost} with Fig. \ref{times} reveals the interesting fact that while TGV has a higher memory cost than NRITV, the former runs faster than the latter. This is because the dual variables in NRITV are placed in a linesearch loop which inevitably increases run time. Note that the values in table \ref{TableMemoryCost} are given for $N=P=1$. For the general case, the total memory cost will have to be trivially multiplied by $N\times P$.}



%





\section*{Declarations}
\begin{itemize}
	\item Funding: This work is an independent research which was not financially supported.
	\item Conflicts of interest/Competing interests: Not applicable.
    \item Availability of data and code: The data and software reproducing the proposed method are publicly available at \cite{Mend}.
	
\end{itemize}
\newpage
\appendix \section*{Appendix: more details on the experiments}  \label{app}

\begin{figure}[H] \includegraphics[width=\textwidth]{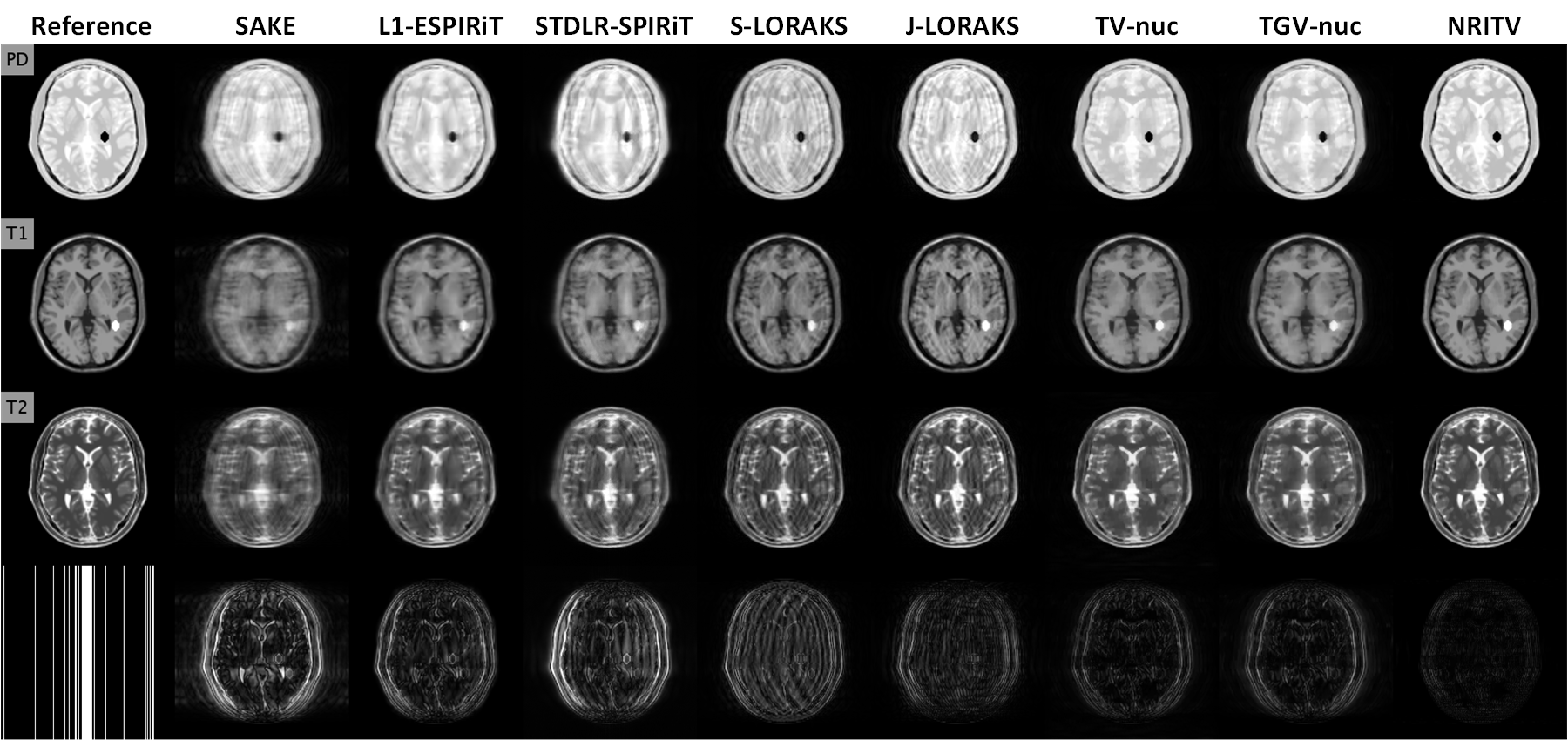} \caption{Full FOVs of reconstructions in Fig. \ref{showpice} in the main article. The bottom row shows the sampling pattern used in the experiment as well as error maps (sum of absolute differences over all contrasts) for each method.} \label{Supp_Lesion}\end{figure}

\begin{figure}[H] \includegraphics[width=\textwidth]{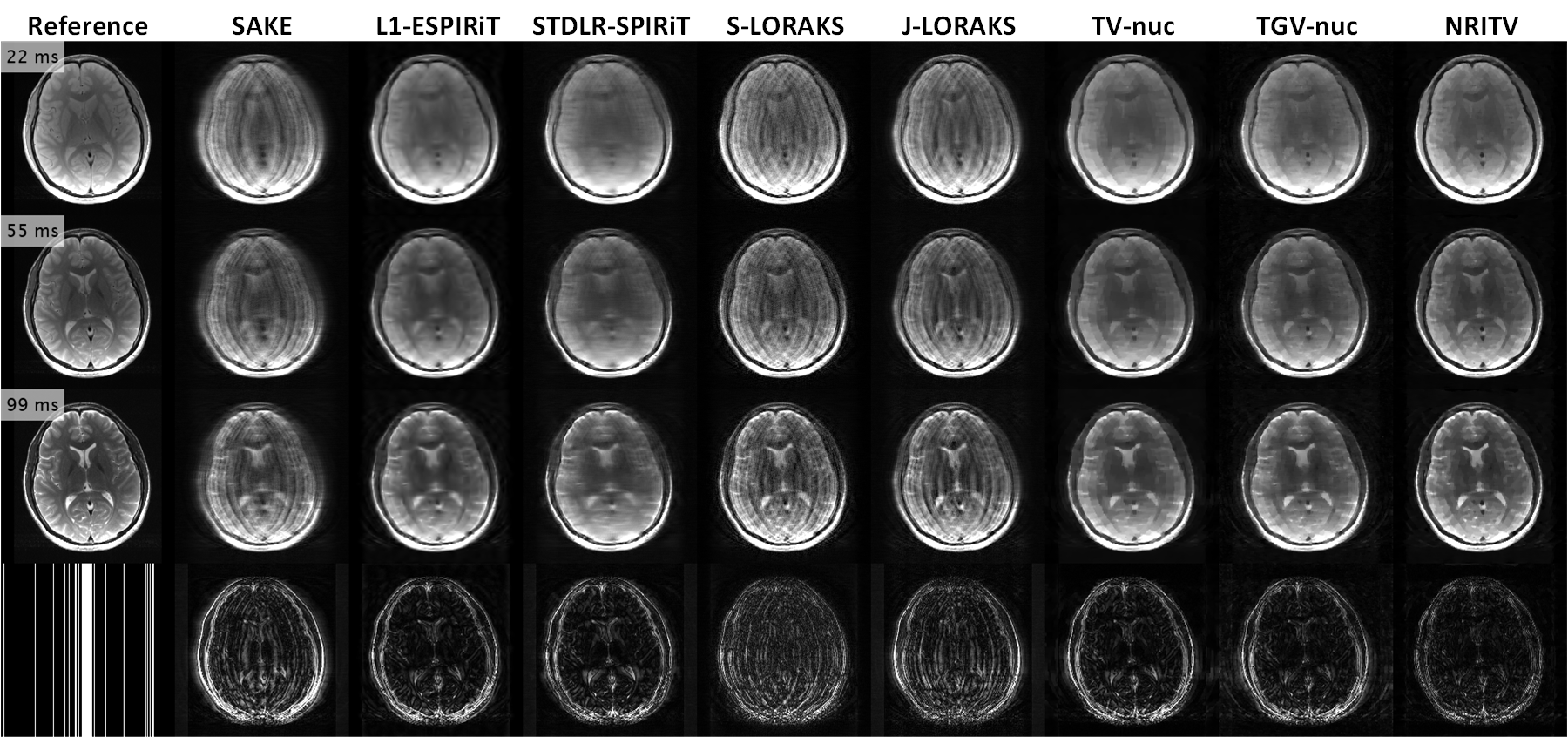}           \caption{Full FOVs of reconstructions in Fig. \ref{showpice_Itthi} in the main article. The bottom row shows the sampling pattern used in the experiment as well as error maps (sum of absolute differences over all contrasts) for each method.}\label{Supp_Itthi} \end{figure}

\begin{figure} \includegraphics[width=\textwidth]{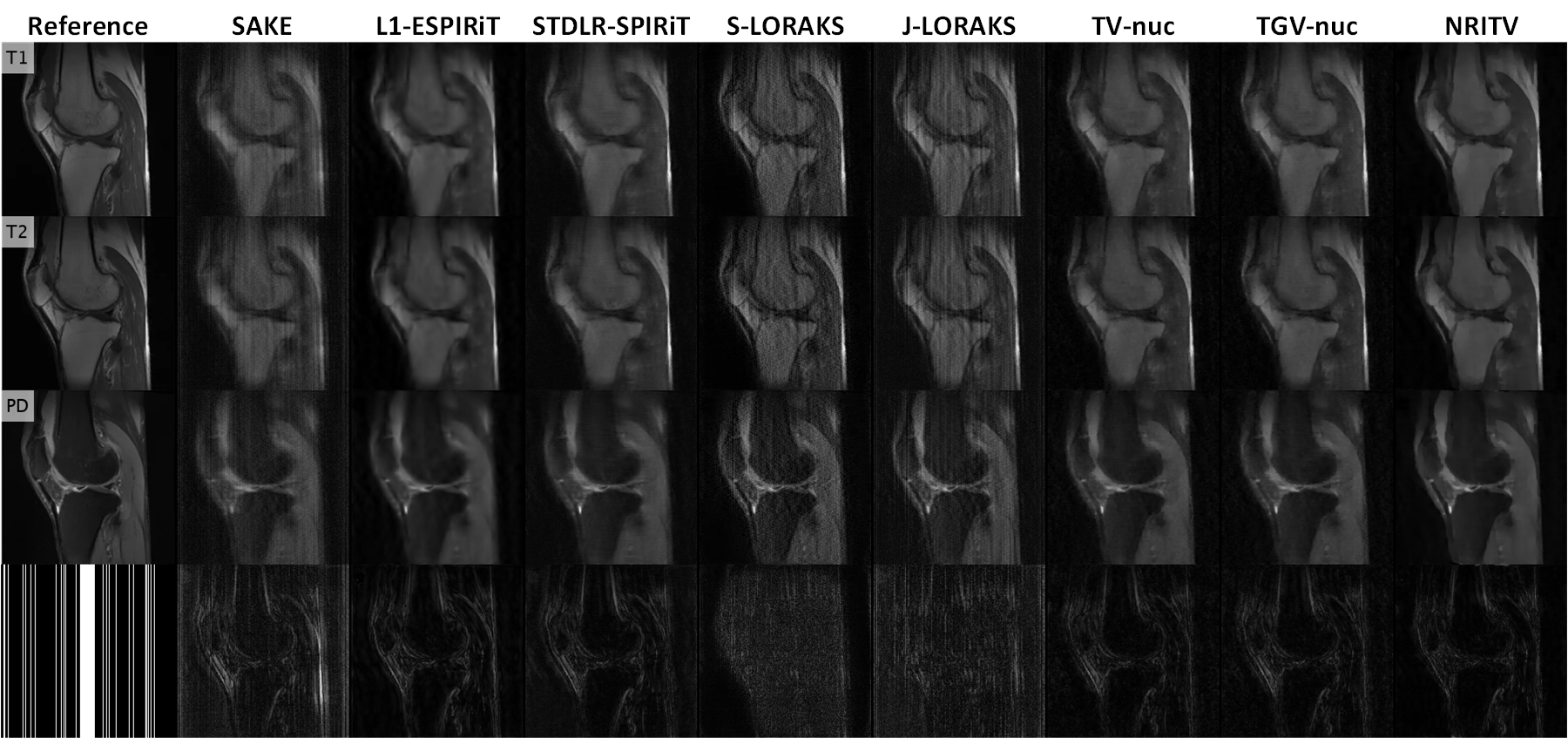}           \caption{Full FOVs of reconstructions in Fig. \ref{showpice_Knee} in the main article. The bottom row shows the sampling pattern used in the experiment as well as error maps (sum of absolute differences over all contrasts) for each method.}\label{Supp_knee} \end{figure} 

\begin{figure} \includegraphics[width=\textwidth]{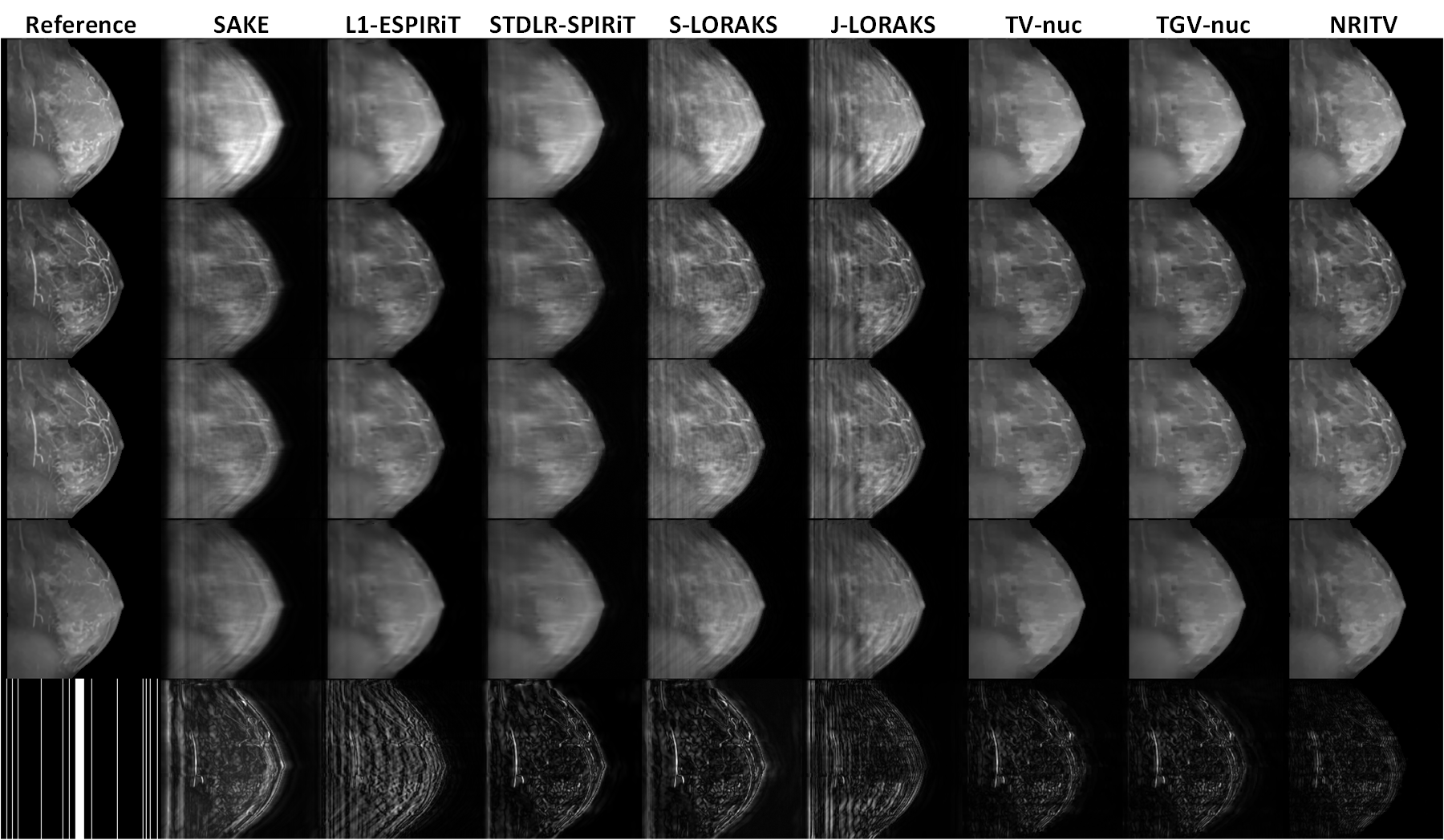}           \caption{Full FOVs of reconstructions in Fig. \ref{showpice2} in the main article. The bottom row shows the sampling pattern used in the experiment as well as error maps (sum of absolute differences over all contrasts) for each method.}\label{Supp_breast} \end{figure}

\begin{figure} \includegraphics[width=\textwidth]{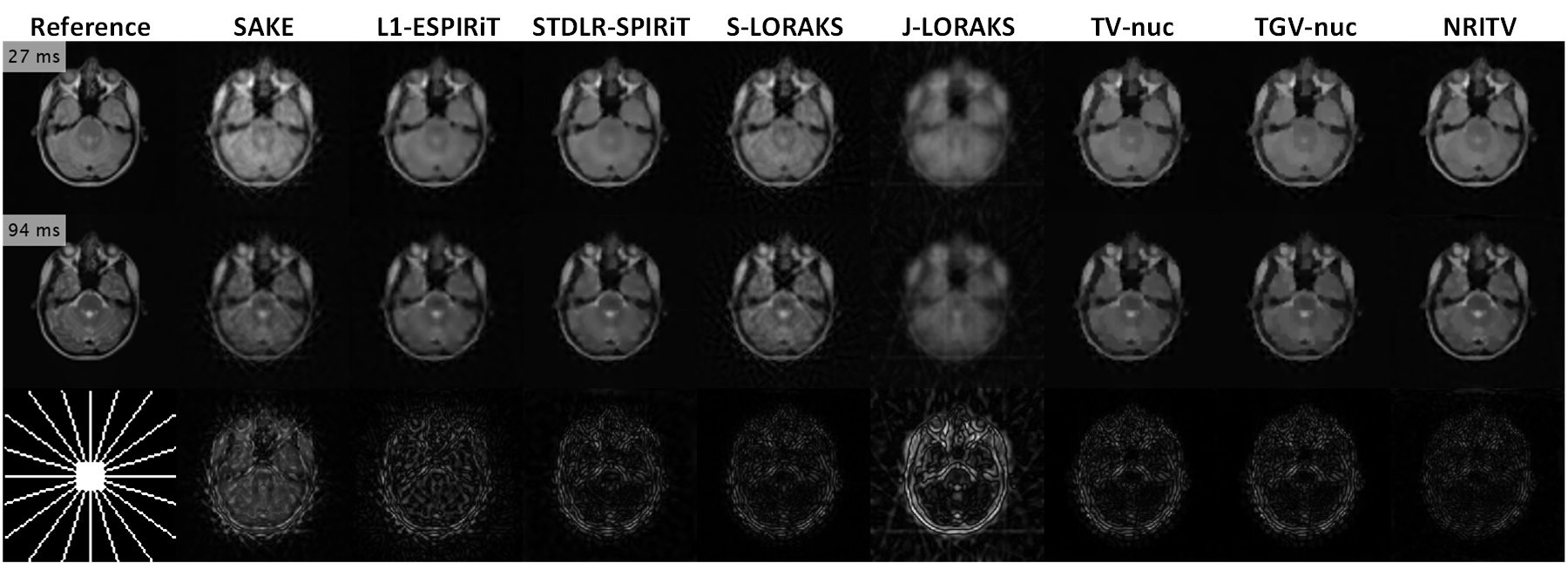}           \caption{Reconstruction results for a radially sampled ($R=7$) double-echo ($27$ and $94$ ms) slice from the in-vivo brain data in reference [11]. It can be seen in the error maps that most of the methods output acceptable results with NRITV being superior. As usual, SENSE-type methods used perfect maps and NRITV used ESPIRiT-estimated maps ($18\%$ error). Compared methods' parameters were mostly the same as the breast experiment.} \label{Supp_radial}\end{figure}


 \newpage
\begin{table}   \centering
	\caption{Parameter Selections in Fig. \ref{BW_bars} ($R=5$)}
	\scalebox{0.85}{  
		\begin{tabular}{ccccccccccc}    \toprule  
			methods & kernel & calib. param. & rank & radius & $\lambda$ & $\alpha_0$ & $\alpha_1$ & $\mu$ &  iters. & $\epsilon$\\ \midrule
			SAKE & $4 \times 4$ & - & $2$ & - & - & - & - & - & $500$ &- \\
			L1-ESPIRiT & $4 \times 4$ & $2\times 10^{-2}$ & - & - & $7\times 10^{-3}$ & - & - & - & $100$ &- \\
			STDLR-SPIRiT & $3\times 3$ & $5\times 10^{-5}$ & - & - & $(10^4,10^6)$ & - & - & - & $15$ & - \\
			S-LORAKS & - & - & $40$ & $3$ & $5\times 10^{-3}$ & - & - & - & $50$ & $10^{-3}$ \\
			J-LORAKS & - & - & $80$ & $1$ & - & - & - & - & $50$ &$10^{-6}$ \\
			TV-nuc & - & - & - & - & $2 \times 10^{-4}$ & - & - & $5\times 10^{-3}$ & $500$& $10^{-5}$  \\
			TGV-nuc & - & - & - & - & $5$ & $3.5 \times 10^{-3}$ & $1.8 \times 10^{-3}$ & $5\times10^{-2}$ & $500$ & $10^{-5}$\\
			\bottomrule     
		\end{tabular}
	} \label{Table_BWR5}
\end{table}

\begin{table}   \centering
	\caption{Parameter Selections in Fig. \ref{BW_bars} ($R=7$)}  
	\scalebox{0.85}{
		\begin{tabular}{lcccccccccc}    \toprule  
			methods & kernel & calib. param. & rank & radius & $\lambda$ & $\alpha_0$ & $\alpha_1$ & $\mu$ &  iters & $\epsilon$\\ \midrule
			SAKE & $4 \times 4$ & - & $2$ & - & - & - & - & - & $500$ & -  \\
			L1-ESPIRiT & $3 \times 3$ & $2\times 10^{-2}$ & - & - & $2.5\times 10^{-3}$ & - & - & - & $100$ & - \\
			STDLR-SPIRiT & $3\times 3$ & $5\times 10^{-5}$ & - & - & $(10^4,10^6)$ & - & - & - & $15$ & - \\
			S-LORAKS & - & - & $40$ & $3$ & $5\times 10^{-3}$ & - & - & - & $50$ & $10^{-3}$ \\
			J-LORAKS & - & - & $94$ & $1$ & - & - & - & - & $150$& $10^{-6}$  \\
			TV-nuc & - & - & - & - & $2 \times 10^{-4}$ & - & - & $5\times 10^{-3}$ & $500$ & $10^{-5}$ \\
			TGV-nuc & - & - & - & - & $5$ & $3.5 \times 10^{-3}$ & $1.8 \times 10^{-3}$ & $5\times 10^{-2}$ & $500$& $10^{-5}$  \\
			\bottomrule     
		\end{tabular}
	}\label{Table_BWR7}
\end{table}

\begin{table}   \centering
	\caption{Parameter Selections in Fig. \ref{showpice} and Fig. \ref{Supp_Lesion} (phantom lesion). \textcolor{black}{Optimization plot for the boldfaced parameter is given in Fig. \ref{param_opt} (c).}  }
	\scalebox{0.85}{
		\begin{tabular}{lcccccccccc}    \toprule  
			methods & kernel & calib. param. & rank & radius & $\lambda$ & $\alpha_0$ & $\alpha_1$ & $\mu$ &  iters & $\epsilon$\\ \midrule
			SAKE & $5 \times 5$ & - & $2$ & - & - & - & - & - & $100$ &-  \\
			L1-ESPIRiT & $6 \times 6$ & $2\times 10^{-2}$ & - & - & $2.5\times 10^{-3}$ & - & - & - & $100$& -  \\
			STDLR-SPIRiT & $5\times 5$ & $2\times 10^{-2}$ & - & - & $(10^4,10^6)$ & - & - & - & $15$ & - \\
			S-LORAKS & - & - & $40$ & $3$ & $5\times 10^{-3}$ & - & - & - & $50$ & $10^{-3}$  \\
			J-LORAKS & - & - & $\mathbf{160}$ & $2$ & - & - & - & - & $100$ & $10^{-6}$ \\
			TV-nuc & - & - & - & - & $2 \times 10^{-4}$ & - & - & $5\times 10^{-3}$ & $500$& $10^{-5}$  \\
			TGV-nuc & - & - & - & - & $5$ & $3.5\times 10^{-3}$ & $1.8 \times 10^{-3}$ & $10^{-1}$ & $500$& $10^{-5}$  \\
			\bottomrule     
		\end{tabular}
	}\label{Table_phantom}
\end{table}
\begin{table}   \centering
	\caption{Parameter Selections in Fig. \ref{showpice_Itthi} and Fig. \ref{Supp_Itthi} (brain). \textcolor{black}{Optimization plot for the boldfaced parameter is given in Fig. \ref{param_opt} (a).}} 
	\scalebox{0.85}{ 
		\begin{tabular}{lcccccccccc}    \toprule  
			methods & kernel & calib. param. & rank & radius & $\lambda$ & $\alpha_0$ & $\alpha_1$ & $\mu$ &  iters & $\epsilon$\\ \midrule
			SAKE & $4 \times 4$ & - & $1$ & - & - & - & - & - & $100$ &-  \\
			L1-ESPIRiT & $5 \times 5$ & $ 10^{-2}$ & - & - & $ 10^{-2}$ & - & - & - & $100$ &-  \\
			STDLR-SPIRiT & $5\times 5$ & $ 10^{-4}$ & - & - & $(10^3,10^6)$ & - & - & - & $25$ & - \\
			S-LORAKS & - & - & $40$ & $3$ & $10^{-2}$ & - & - & - & $50$& $10^{-3}$  \\
			J-LORAKS & - & - & $250$ & $3$ & - & - & - & - & $50$& $10^{-6}$ \\
			TV-nuc & - & - & - & - &  $\mathbf{2 \times 10^{-3}}$ & - & - & $5\times 10^{-3}$ & $500$ & $10^{-5}$ \\
			TGV-nuc & - & - & - & - & $1$ & $10^{-2}$ & $5 \times 10^{-3}$ & $10^{-2}$ & $500$ & $10^{-5}$ \\
			\bottomrule     
		\end{tabular}
	}\label{Table_Ithhi}
\end{table}   
\begin{table}   \centering
	\caption{Parameter Selections in Fig. \ref{showpice_Knee} and Fig. \ref{Supp_knee} (knee). \textcolor{black}{Optimization plots for the boldfaced parameters are given in Fig. \ref{param_opt} (d) and (b).}}
	\scalebox{0.85}{  
		\begin{tabular}{lcccccccccc}  
			
			\toprule  
			methods & kernel & calib. param. & rank & radius & $\lambda$ & $\alpha_0$ & $\alpha_1$ & $\mu$ &  iters. & $\epsilon$\\ \midrule
			SAKE & $5 \times 5$ & - & $3$ & - & - & - & - & - & $100$&-  \\
			L1-ESPIRiT & $6 \times 6$ & $\mathbf{2\times 10^{-2}}$ & - & - & $ 10^{-2}$ & - & - & - & $100$ &-  \\
			STDLR-SPIRiT & $5\times 5$ & $ 10^{-4}$ & - & - & $(10^4,10^7)$ & - & - & - & $15$ & - \\
			S-LORAKS & - & - & $50$ & $3$ & - & - & - & - & $50$ & $10^{-3}$ \\
			J-LORAKS & - & - & $450$ & $2$ & - & - & - & - & $50$ & $10^{-6}$ \\
			TV-nuc & - & - & - & - & $6 \times 10^{-3}$ & - & - & $5\times 10^{-3}$ & $500$& $10^{-5}$  \\
			TGV-nuc & - & - & - & - & $1$ & $\mathbf{ 10^{-2}}$ & $5 \times 10^{-3}$ & $10^{-2}$ & $500$& $10^{-5}$  \\
			\bottomrule     
		\end{tabular}
	}\label{Table_knee}
\end{table}         

\begin{table}   \centering
	\caption{Parameter Selections in Fig. \ref{showpice2} and Fig. \ref{Supp_breast} (breast). \textcolor{black}{Optimization plots for the boldfaced parameters are given in Fig. \ref{param_opt} (f) and (e).}}  
	\scalebox{0.85}{
		\begin{tabular}{lcccccccccc}    \toprule  
			methods & kernel & calib. param. & rank & radius & $\lambda$ & $\alpha_0$ & $\alpha_1$ & $\mu$ &  iters. & $\epsilon$\\ \midrule
			SAKE & $3 \times 3$ & - & $2$ & - & - & - & - & - & $100$&-  \\
			L1-ESPIRiT & $3 \times 3$ & $2\times 10^{-2}$ & - & - & $ 2.5 \times 10^{-3}$ & - & - & - & $100$ &-  \\
			STDLR-SPIRiT & $\mathbf{5\times 5}$ & $ 10^{-4}$ & - & - & $(10^3,10^7)$ & - & - & - & $15$ & - \\
			S-LORAKS & - & - & $50$ & $\mathbf{3}$ & $10^{-3}$ & - & - & - & $50$ & $10^{-3}$ \\
			J-LORAKS & - & - & $150$ & $2$ & - & - & - & - & $150$ & $10^{-6}$ \\
			TV-nuc & - & - & - & - & $2 \times 10^{-4}$ & - & - & $5\times 10^{-3}$ & $500$& $10^{-5}$  \\
			TGV-nuc & - & - & - & - & $5$ & $ 3.5\times 10^{-3}$ & $1.8 \times 10^{-3}$ & $10^{-1}$ & $500$& $10^{-5}$  \\
			\bottomrule     
		\end{tabular}
	}\label{Table_breast}
\end{table}         

\begin{figure}
\includegraphics[width=\textwidth]{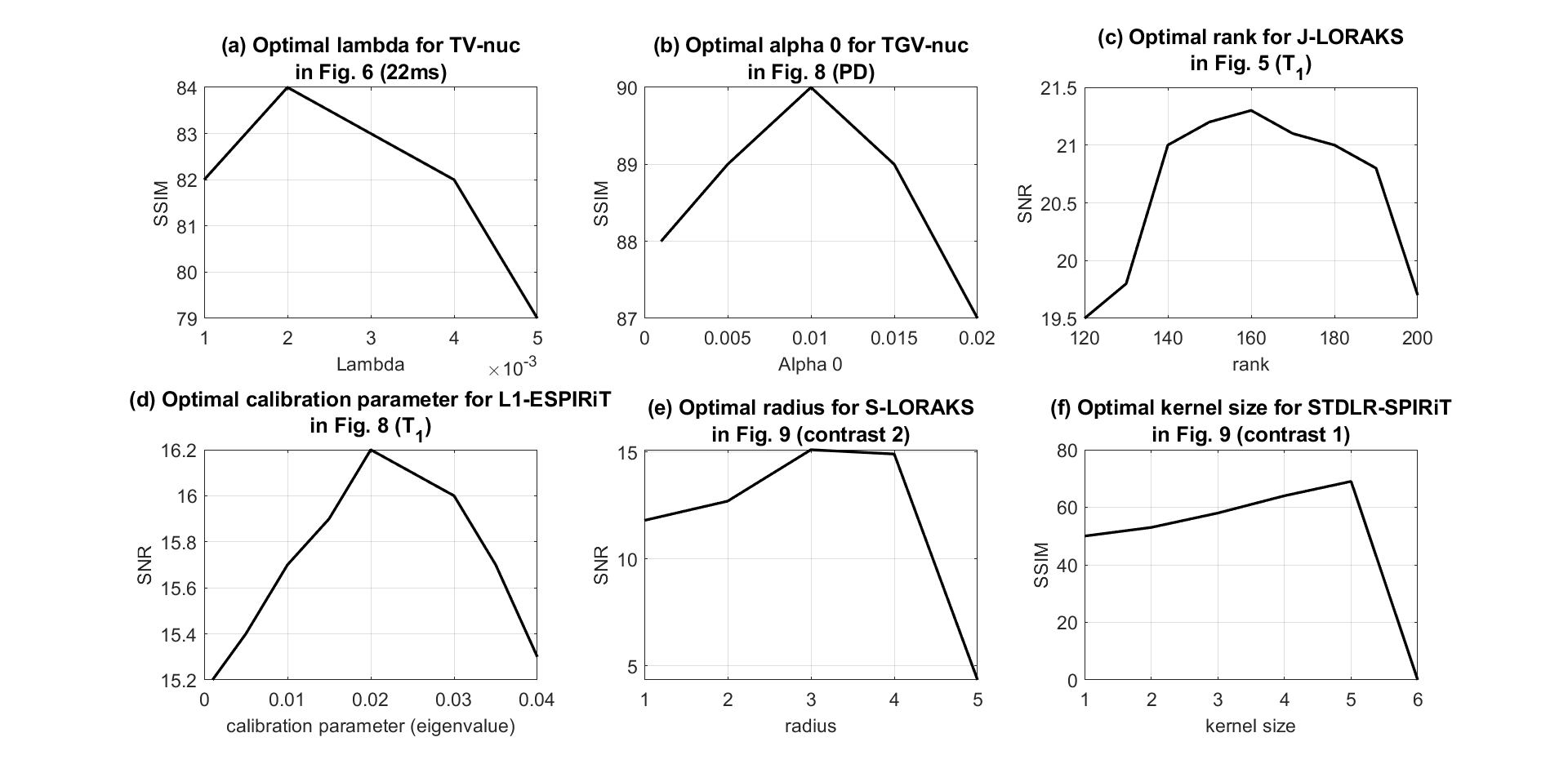}
\caption{Optimization process for some of the parameters of the compared works.}\label{param_opt}
\end{figure}

\clearpage




%







\vspace{4cm}
\textbf{Erfan Ebrahim Esfahani} received his BSc in mathematics from Amirkabir University of Technology (Tehran Polytechnic) in 2016 and MSc in applied mathematics from the University of Tehran in 2018. During his time in the school of mathematics, his study mainly concerned signal and image analysis. He is currently an independent researcher and searching for available PhD programs in biomedical fields focused on his research interests including radiology, medical imaging, nuclear medicine and image-guided interventions.  

\end{document}